\newtheorem{fact}{Fact}
\newcommand{\junk}[1]{}
\begin{document}
\title{Boolean Matrix Multiplication for
  Highly Clustered Data on the Congested Clique}
\titlerunning{Boolean Matrix Multiplication...}
\author{
  Andrzej Lingas
  \inst{}
}
\authorrunning{A. Lingas}
\institute{
Department of Computer Science, Lund University, Lund, Sweden. 
\email{Andrzej.Lingas@cs.lth.se}}

\maketitle

\begin{abstract}
  We present a protocol for the Boolean matrix product of two
  $n\times b$ Boolean matrices on the congested clique designed for
  the situation when the rows of the first matrix or the columns of
  the second matrix are highly clustered in the space $\{0,1\}^n.$
  With high probability (w.h.p), it
  uses $\tilde{O}\left(\sqrt {\frac M n+1}\right)$ rounds
  on the
  congested clique with $n$ nodes, where $M$ is the minimum of the
  cost of a minimum spanning tree (MST) of the rows
  of the first input matrix and the cost of
  an MST of the columns of the second input matrix in the Hamming
  space $\{0,1\}^n.$ A key step in our protocol is the computation of
  an approximate minimum spanning tree of a set of $n$ points in the
  space $\{0,1\}^n$.  We provide a protocol for this problem (of
  interest in its own rights) based on a known randomized
  technique of dimension reduction in Hamming spaces. W.h.p., it constructs
  an $O(1)$-factor approximation of an MST of $n$ points in the
  Hamming space $\{ 0,\ 1\}^n$ using $O(\log^3 n)$ rounds
  on the congested clique with $n$ nodes.
\end{abstract}
\begin{keywords}
Boolean matrix product, Hamming space, minimum spanning tree (MST),
congested clique.
\end{keywords}

\section{Introduction}
In the distributed communication/computation model of CONGEST, the
focus is on the cost of communication and the cost of local
computation is more or less ignored, in contrast to the classic model
of Parallel Random Access Machine (PRAM) having the opposite focus.
Initially, each processing unit holds a part of the input (typically,
an approximately equal size part) and it is aware of the identities of
the neighboring processing units (nodes) within the underlying
communication network.  The communication and computation proceeds in
rounds.  In each round, each of the processing units can send a
message of logarithmic in the input size number of bits to each other
neighboring processing within the underlying communication
network. In the more powerful {\em unicast} variant, the messages sent by
the unit during a single round can differ, whereas in the weaker
{\em broadcast} variant they must be beidentical.  Each processing unit can
also perform unlimited local computation in each round. The main
objective is to minimize the number of rounds necessary to solve a
given problem on the provided input in this model.

Lotker \emph{et al.} introduced the model of congested clique in in
2005 \cite{LP-SPP05}.  They study the graph minimum spanning tree
(MST) problem in the unicast CONGEST model under the assumption that the
underlying communication network is a clique. Later papers refer to this variant
of CONGEST as congested clique.
\junk{
The distributed communication/computation model of congested clique
has been introduced by
Lotker \emph{et al.} in
2005 \cite{LP-SPP05}.
It focuses on the cost of communication
and ignores that of local computation in contrast  to
the classic model of Parallel Random Access Machine (PRAM) having
the opposite focus.}

The choice of a clique as the underlying communication network is very
natural due to its direct-connectivity and symmetry properties.

Since two decades, efficient protocols for basic (dense) graph
problems have been designed in the congested clique model under the
following assumptions.  Each node of the clique network initially
represents a distinct vertex of the input graph and is aware of
vertex's neighborhood in the input graph.
In each round, each of the $n$ nodes can send a distinct
message of $O(\log n)$ bits to each other node and can perform
unlimited local computation. Note here that when the input graph is
dense, gathering the information about all vertex neighborhoods in a
distinguished node in order to use its unlimited computational power
would require $\Omega(n)$ rounds.  Many graph problems have been shown
to admit substantially faster (with respect to the number of rounds)
protocols \cite{R}, e.g., the minimum spanning tree problem
(MST) admits even an $O(1)$-round protocol on the congested clique
\cite{K_21}.

Also other combinatorial and geometric problems that cannot
be classified as graph ones have been studied in the congested clique
model, e.g., matrix multiplication \cite{CK19}, sorting and routing \cite{L},
and the construction of convex hulls and triangulations
for planar point sets \cite{JLLP24}.
In all these cases,
the basic items, i.e., matrix entries or keys, or points
in the plane,
respectively, are assumed to have $O(\log n)$-bit representations and each node
initially has a batch of $n$ such items. 
As in the graph case,
each node can send a distinct $O(\log n)$-bit message to 
each other node and perform unlimited computation 
in every round. 
In particular, it has been shown that matrix multiplication
(hence, also  Boolean matrix product) admits an
$O(n^{1-2/\omega+\epsilon })$-round algorithm \cite{CK19}, where $\omega$ is the
exponent of fast matrix multiplication (see Preliminaries) 
and $\epsilon $
is any positive constant. Thus, the
matrix multiplication protocol from 
\cite{CK19}
uses roughly at most $n^{0.157}$ rounds
by the current bound $\omega \le 2.371552$ \cite{VXZ24}.
On the other hand, sorting and routing
have been shown to admit $O(1)$-round algorithms \cite{L} under the aforementioned
assumptions.

Augustine et al. presented in \cite{AKP21} several upper bounds on the
number of required rounds for various combinatorial and geometric
problems in the so called $k$-machine mode which includes the congested
$k$-clique model as a special case.
They obtained their bounds by direct PRAM simulations under the
assumptions that the input items are uniformly at random distributed
among the $k$ machines. In case of $n\times $ matrix multiplication,
their upper bound for $k=n$ however is worse than that in \cite{CK19},
and requires the additional assumptions on the input distribution.

Observe that the related problem of computing the matrix product of
  $n\times n$ matrix with an $n$-dimensional column vector admits a
  straightforward $O(1)$-round protocol on the congested clique with
  $n$ nodes. Simply, the node holding the column vector sends its
  $i$-th coordinate to the $i$-th node, for $i=1,...,n,$ in the first
  round.  In the second round, each node sends the received coordinate
  to all other modes.  Consequently, each node can locally compute the
  inner product of its row of the input matrix with the column vector
  in the following round.  However, this observation does not yield
  any $n^{o(1)}$-round protocol for $n\times n$ matrix multiplication
  on the congested clique with $n$ nodes.

The absence of $n^{o(1)}$ upper bounds on
the number of rounds required by matrix multiplication justifies
studying special cases of this problem in the congested clique model.
For instance, protocols requiring fewer rounds for sparse matrices
than that in \cite{CK19} have been presented in \cite{CLT20}.

In the sequential case, among other things, an alternative randomized
approach to Boolean matrix product has been developed in
\cite{BL,GL03}.  The alternative algorithm
computes the Boolean matrix product of two
$n\times n$ Boolean matrices $A$ and $B$ in time $\tilde{O}(n(n+ M
)),$ where $M$ is the minimum of the cost of an MST of the rows of $A$
and the cost of an MST of the columns of $B$ in the Hamming space
$\{0,\ 1\}^n$ \cite{BL}.  The alternative approach is superior solely
when the rows of $A$ or the columns of $B$ are highly clustered, i.e.,
when $M$ is substantially smaller than $n^{\omega -1}.$

In this paper, we reconsider this
approach in the congested clique model.

Our approach to the Boolean matrix product of two $n\times n$ Boolean
matrices on the congested clique raises the question whether there
exists an efficient (with respect to the number of rounds) protocol
for an approximate MST of a set $S$ of $n$ points (vectors) in the
Hamming space $\{ 0,\ 1 \}^n,$ when each node of the congested clique
holds a distinct input point.  The nodes initially are not aware of
the Hamming distances between their points and the other points so the
fast protocols for MST in edge weighted graphs \cite{K_21,R} cannot be
applied directly.  The question is also of an independent interest
since computing an MST is a widely used technique for clustering data,
especially in machine learning \cite{GZJ06,LRN09}.

\junk{We can compute all pairwise
Hamming distances by applying doubly the round protocol for matrix
multiplication over a ring on the congested clique from \cite{CK19}
and then running the $O(1)$-round protocol for MST in edge weighted
graphs from \cite{K_21}. All this yields an $O(n^{1-2/\omega +\epsilon
})$-round protocol for the Hamming MST, for any positive constant
$\epsilon $ (Theorem \ref{theo: exact} in Section 3).

In a number of applications of MST in the Hamming space, a reasonable
approximation of MST is sufficient.  An example of such a application
is Boolean matrix multiplication for highly clustered data
\cite{BL,GL03}.  So the question arises if protocols for a reasonable
approximation of MST in the Hamming space $\{0,\ 0\}^n$ using a small
number of rounds (e.g., a polylogarithmic in $n$) on the congested
clique are possible?}

We provide a positive answer to this question based on the randomized
technique of dimension reduction in Hamming spaces from \cite{KOR00}. 
With high probability (w.h.p.), our protocol constructs an $O(1)$-factor approximation of an MST of $n$
points in the Hamming space $\{ 0,\ 1\}^n$  using $O(\log^3 n)$ rounds
on the congested clique with $n$ nodes  (see Theorem \ref{theo: hmst} in Section  3).
We also observe that an exact MST of a set of $n$ points in
the space $\{ 0,\ 1\}^n$ can be deterministically constructed in  $O(n^{1-2/\omega +\epsilon
})$ rounds on the congested clique, for any positive constant
$\epsilon $ (Theorem \ref{theo: exact} in Section 3).

Equipped with our protocol for the approximate MST in the Hamming
space $\{ 0,\ 1\}^n$, we can provide our protocol for the Boolean
matrix product of two $n\times n$ matrices on the congested clique.
W.h.p., it uses $\tilde{O}\left(\sqrt {\frac M n+1}\right)$ rounds on
the congested clique with $n$ nodes, where $M$ is the minimum of the
cost of an MST of the rows of the first input matrix and the cost of
an MST of the columns of the second input matrix in the Hamming space
$\{0,1\}^n.$ For $i=1,\dots,n,$ the $i$-th rows of both input matrices
are initially held in the $i$-th node and at the end the $i$-th node
holds the $i$-th row of the product matrix (see Theorem \ref{theo:
  main} in Section 4).  Our protocol for the Boolean matrix product is
superior over that $O(n^{1-2/\omega+\epsilon })$-round protocol from
\cite{K_21} when the rows of the first input matrix or the columns of
the second input matrix are highly clustered so $M$ is substantially
smaller than $n^{3-\frac 4 {\omega}}$ (which is roughly $n^{1.31}$ in
terms of the recent bound on $\omega$ from \cite{VXZ24}).

Our paper is structured as follows.  The next section contains basic
definitions, facts, and two lemmata on routing in the congested clique
model.  Section 3 presents our results on MST in the Hamming space $\{
0,1\}^n$ while Section 4 presents our results on the Boolean matrix
product for highly congested data on the congested clique.

\section{Preliminaries}

For a positive integer $r$, $[r]$ stands for the set of positive integers not exceeding $r.$

The transpose of a matrix $D$ is denoted by $D^t.$

The symbol $\omega$ denotes the smallest real number such that two $n\times n$
matrices can be multiplied using $O(n^{\omega +\epsilon})$
operations over the field of reals, for all $\epsilon > 0.$

The {\em Hamming distance} between two points $p,\ q$ (vectors) in $\{
0,\ 1\}^n$ is the number of the coordinates in which the two points
differ.  Alternatively, it can be defined as the distance
between $p$ and $q$ in
the $\ell_1$ metric over $\{0,\ 1\}^n.$
It is denoted by $h(p,q).$ A {\em witness of the Hamming
  distance} between the two points is the number (index) of a
coordinate in which the points differ.  Note that if $h(p,q)=d$ then
there are $d$ witnesses of the Hamming distance between $p$ and $q.$
\junk{
The {\em communication} of a protocol  within given $t$ rounds on the congested
clique is the total number of messages exchanged by the clique nodes
within these rounds.
The {\em work} done by a protocol within given $t$ rounds on the congested
clique is the total time taken by the local computations at the nodes
of the clique within the $t$ rounds in the logarithmic-cost Random
Access Machine model \cite{AHU}. Specifically, we assume that posting $m$ messages
or receiving  $m$ messages in a round requires $\Omega (m \log N)$
work, where $N$ is the input size. Therefore, the work is always greater
than the communication at least by a logarithmic factor.}

\subsection{Routing}

Lenzen gave an efficient solution to the following fundamental routing problem
in the congested clique model, known as
     the {\em Information Distribution Task} (IDT) \cite{L}:\\
Each node of the congested $n$-clique holds a set of exactly $n$
$O(\log n)$-bit messages with their destinations, with multiple messages from
the same source node to the same destination node allowed.
Initially, the destination of each message is known only to its source node.
Each node is the
destination of exactly $n$ of the aforementioned messages. The
messages are globally lexicographically ordered by their source node, their
destination, and their number within the source node.  For simplicity,
each such message explicitly contains these values, in particular
making them distinguishable. The goal is to deliver all messages to their
destinations, minimizing the total number of rounds.

Lenzen proved that   IDT can be solved in $O(1)$ rounds (Theorem 3.7 in
\cite{L}). He also noted that the relaxed IDT, where each node is required
to send and receive at most $n$ messages, reduces to IDT
in $O(1)$ rounds.
Hence, the following fact holds.

\begin{fact}\cite{L}\label{fact: routing}
  The relaxed Information Distribution
  Task can be deterministically solved within $O(1)$ rounds.
  \end{fact}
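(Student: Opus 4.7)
The plan is to reduce the relaxed IDT to the strict IDT of Lenzen (Theorem 3.7 in \cite{L}) and then invoke its deterministic $O(1)$-round solution. For the reduction, in the relaxed setting every node $v$ knows the number $s_v \le n$ of messages it is about to send, and by having each source announce its per-destination count to the corresponding destination in one round, every node can also learn the number $r_v \le n$ of messages it will receive. Broadcasting $s_v$ and $r_v$ to all other nodes takes one further round. Once the vectors $(s_v)_v$ and $(r_v)_v$ are common knowledge, each node can deterministically pad its outgoing batch with dummy messages (for example, matching outgoing deficits $n - s_v$ to incoming deficits $n - r_w$ in lexicographic order), producing an instance in which every node sends and receives exactly $n$ messages; the dummies are tagged and discarded upon arrival.

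For the strict IDT itself, one follows the two-phase scheme of \cite{L}: a \emph{scatter} phase in which each source spreads its $n$ messages across all $n$ intermediate nodes so that every intermediate receives $O(1)$ messages from each source, followed by a \emph{gather} phase in which each intermediate forwards the messages it holds to their true destinations. The scatter is chosen so that, for every destination $w$, each intermediate ends up holding only $O(1)$ messages targeted at $w$, which makes the gather phase also run in $O(1)$ rounds.

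The main obstacle, and the core of Lenzen's argument, is proving that such a simultaneously balanced scatter always exists and can be computed deterministically with negligible communication. Modeling the $n \times n$ shipping matrix $M$, whose row and column sums equal $n$, as a bipartite multigraph with balanced degrees, one decomposes it into $O(1)$ layers of maximum row- and column-degree $O(1)$ via a Hall-style edge-coloring argument; each layer is then a near-permutation pattern routable in a single round on the congested clique, yielding the desired $O(1)$-round bound.
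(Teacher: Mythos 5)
The paper does not prove this statement at all: it is imported verbatim as a black box, with the text preceding it simply citing Lenzen's Theorem~3.7 for the strict IDT and Lenzen's own remark that the relaxed version reduces to the strict one in $O(1)$ rounds. So there is no in-paper proof to compare against; what can be judged is whether your reconstruction of Lenzen's argument is sound.

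Your reduction from relaxed to strict IDT is correct and is exactly the step Lenzen alludes to: announcing per-destination counts, broadcasting $s_v$ and $r_v$, and padding with tagged dummies matched in lexicographic order is a clean deterministic $O(1)$-round reduction. Your sketch of the strict IDT algorithm, however, contains a quantitative slip. The shipping matrix has row and column sums equal to $n$, so it cannot be decomposed into $O(1)$ layers each of maximum row- and column-degree $O(1)$ --- that would account for only $O(1)$ messages per node. The decomposition that makes the scatter/gather scheme work is into $n$ (near-)permutations (e.g.\ via K\"onig's edge-coloring of the $n$-regular bipartite multigraph), with the $k$-th permutation assigned to the $k$-th intermediate node: then each source sends exactly $O(1)$ messages to each intermediate in the scatter, and each intermediate holds exactly $O(1)$ messages per destination for the gather, giving $O(1)$ rounds for each phase. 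Finally, your closing claim that this decomposition ``can be computed deterministically with negligible communication'' is precisely the hard part of Lenzen's proof and is not justified by a Hall-style existence argument: no single node knows the full shipping matrix, so the balanced assignment must itself be computed distributively, which is why Lenzen's actual algorithm goes through several intermediate redistribution phases rather than a one-shot edge coloring. None of this affects the present paper, which uses the fact only as a cited primitive, but as a standalone proof your sketch has a genuine gap at that last step.
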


The following straightforward generalization of Fact \ref{fact: routing} will be useful.

\begin{lemma}\label{lem: 1}
  Consider a message distribution task on the congested clique,
  where each node has to send at most $kn$ messages and to receive
  at most $\ell n$ messages. The task can be implemented in
  $O(k\ell )$ rounds.
\end{lemma}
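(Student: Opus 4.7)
The plan is to reduce the generalised routing task to $O(k\ell)$ sequential calls of the relaxed IDT of Fact~\ref{fact: routing}, each of which costs $O(1)$ rounds. I would model the messages as directed edges of a bipartite multigraph between senders and receivers; by assumption every sender has out-degree at most $kn$ and every receiver has in-degree at most $\ell n$. It is therefore enough to partition the edge set into $O(k\ell)$ batches in each of which every node sends and receives at most $n$ messages, since every such batch is an instance of the relaxed IDT and can be discharged in $O(1)$ rounds by Fact~\ref{fact: routing}.

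The simplest way I would produce such a partition is by random labelling: every source independently tags every one of its messages with a uniformly random label from $[Ck\ell]$, for a sufficiently large constant $C$. For any fixed source the expected number of its messages receiving a fixed label is at most $n/(C\ell)$, and for any fixed destination the expected number of incoming messages with a fixed label is at most $n/(Ck)$. A multiplicative Chernoff bound together with a union bound over the $O(nk\ell)$ (node, label) pairs then guarantees that, with high probability, no source sends and no destination receives more than $n$ messages under any single label. Each label consequently defines a valid relaxed IDT instance, and executing all $O(k\ell)$ of them in sequence yields the stated round bound. The label is carried by the message itself as an $O(\log n)$-bit tag, so the destinations discover it upon delivery and no preliminary coordination round is required.

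The main obstacle I anticipate is the asymmetry between the send bound $kn$ and the receive bound $\ell n$, since Lenzen's IDT is symmetric in those two parameters and any partition into valid batches must respect constraints on both sides simultaneously. The random labelling resolves this cleanly because the same constant $C$ produces the desired concentration on both sides. If a fully deterministic protocol is required one can instead first exchange the $n^2$ pair-counts $x_{sd}$ in a single relaxed IDT and then schedule the messages according to a balanced bipartite edge colouring of the multigraph, which exists by K\"onig's theorem with at most $\max(kn,\ell n)$ colour classes that can be grouped $n$ at a time; this gives $\max(k,\ell) \le k\ell$ relaxed IDT instances and hence the same round bound.
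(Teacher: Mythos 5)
Your main argument is correct and yields the stated $O(k\ell)$ bound, but it takes a genuinely different route from the paper. The paper's proof is deterministic and iterative: it splits the messages on the sender side into $k$ sub-tasks of at most $n$ outgoing messages per node, and within each sub-task it repeatedly peels off a sub-sub-task in which every node receives at most $n$ messages, using an explicit two-round handshake (senders announce how many messages they can deliver, receivers announce how many they can accept) to define each sub-sub-task before invoking Fact~\ref{fact: routing}; this peels away at least $n$ received messages per node per iteration, giving $O(\ell)$ relaxed-IDT calls per sub-task and $O(k\ell)$ in total. You instead partition all messages in one shot by independent random labels from $[Ck\ell]$ and argue via Chernoff plus a union bound that every label class is already a valid relaxed-IDT instance on both the sending and the receiving side. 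Your route avoids the coordination handshakes and is arguably cleaner, but it only succeeds with high probability, whereas the paper's lemma (like Fact~\ref{fact: routing} it builds on) is deterministic; in this paper that weakening is harmless because every theorem that uses the lemma is already a w.h.p.\ statement, but it is a real difference in the guarantee. One caveat on your deterministic fallback: the K\"onig edge-colouring argument would in fact give the stronger bound $O(\max(k,\ell))$ rounds, but the coordination step you sketch does not work as stated --- a single relaxed IDT on the pair-counts $x_{sd}$ lets each node learn only its own row and column of the demand matrix, while computing a globally consistent proper edge colouring requires every node to know the entire $n\times n$ matrix, and disseminating that to all nodes costs $\Omega(n)$ rounds (each node would have to receive $n^2$ messages). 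So the deterministic alternative should either be dropped or replaced by an explicitly local scheduling rule; the randomized labelling argument stands on its own.
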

\begin{proof}
  Consider the first sub-task, where each node has to send only its
  first $n$ messages. Still some nodes can receive up to $\ell n$
  messages in the sub-task. So, let us consider a sub-sub-task,
  where each node receives only the first $n$ messages that it
  should receive in the sub-task.  To specify the sub-sub-task,
  each node informs each recipent about the numer of messages
  it can deliver in the sub-sub-task. In the next round,
  the recipents inform
  the senders about the number of messages they can receive
  in the sub-sub-task. The messages are lexigraphically ordered
  as described in Preliminaries.
  We can apply the routing protocol
  from Fact \ref{fact: routing}
  in order to implement the sub-sub-task in $O(1)$ rounds.
  After that, each node needs to receive at most $(\ell -1)n$
  messages of the first sub-task. So, we can form and implement a new
  sub-sub-task to decrease the number of messages to be delivered
  to each node to at most $(\ell-2)n$ in $O(1)$ rounds and so
  on. It follows that we can implement the first sub-task in
  $O(\ell ))$ rounds. Then, we can consider the second, third ...
  and $k$-th sub-task analogously defined, and implement
  each of them in $O(\ell)$ rounds analogously.
  The upper bound $O(k\ell )$ on the number of rounds follows.
  \qed
\end{proof}

The following upper bound on the number of rounds will be also
useful.

\begin{lemma}\label{lem: 2}
  Consider a message distribution task on the congested clique,
  where each of the $n$ nodes
  has to send its vector of at most $n$ messages
  to a number of nodes (so each of them receives the same whole
  vector from the node) and each node receives
  the vectors from at most $\ell$ nodes.
  The task can be implemented in
  $O(\ell \log n)$ rounds.
\end{lemma}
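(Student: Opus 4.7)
The plan is to broadcast each source's vector via a doubling broadcast that completes in $O(\log n)$ phases, and to implement each phase in $O(\ell)$ rounds using Fact~\ref{fact: routing}, which gives the stated $O(\ell \log n)$ bound.

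First I would fix, for each source $s$ with destination set $D_s$ of size $d_s\le n$, a doubling-pipeline schedule in which only the nodes of $\{s\}\cup D_s$ ever hold $v_s$: in the first phase $s$ sends $v_s$ to one destination in $D_s$, and in each later phase every node that already holds $v_s$ forwards it to one further not-yet-informed destination. The number of holders then doubles per phase, so every destination has been reached after at most $\lceil\log n\rceil$ phases.

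Next I would bound the per-phase load at a generic clique node $u$. Because every holder of a vector is either its source or one of its destinations, and because $u$ belongs to $D_s$ for at most $\ell$ values of $s$, the node $u$ ever holds at most $\ell+1$ distinct vectors. Consequently, in any single phase $u$ has at most $\ell+1$ vectors to forward, and since its total of at most $\ell$ receptions is spread across the phases, $u$ also receives at most $\ell$ vectors in any single phase.

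To carry out one phase in $O(\ell)$ rounds I would use a coordinate-wise scatter followed by a destination-wise gather. After an arbitrary ordering of each node's outgoing vectors, the $j$-th scatter sub-round has every sender transmit the $i$-th coordinate of its $j$-th vector to node $i$; this is a relaxed IDT solved in $O(1)$ rounds by Fact~\ref{fact: routing}, and iterating over $j\le\ell+1$ sub-rounds completes the scatter in $O(\ell)$ rounds. A symmetric argument handles the gather from the intermediates to the new holders. Multiplying by the $O(\log n)$ doubling phases then yields the claimed $O(\ell\log n)$ total.

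The main obstacle I anticipate is the load-bound step. To keep each phase at $O(\ell)$ rounds one must confine the broadcast to the sets $\{s\}\cup D_s$ rather than routing through arbitrary helpers, and then propagate the hypothesis that each node receives at most $\ell$ vectors into a bound on how many trees any node ever participates in. Without that structural discipline, a naive invocation of Lemma~\ref{lem: 1} would only yield $O(\ell^2)$ rounds per phase, falling short of the target by an $\ell$ factor.
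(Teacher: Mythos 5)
Your overall architecture differs from the paper's: you run all $n$ broadcasts concurrently as one doubling process of $O(\log n)$ phases and try to implement each phase in $O(\ell)$ rounds, whereas the paper first splits the task into $\ell$ sub-tasks in which each node receives only its $i$-th vector, so that within a sub-task the recipient sets of distinct sources are disjoint and each doubling phase is already a relaxed IDT costing $O(1)$ rounds. That restructuring is exactly what lets the paper avoid the congestion analysis you are forced into, and it is where your argument has a genuine gap.

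The gap is in the sentence ``a symmetric argument handles the gather.'' The gather is not symmetric to the scatter. In scatter sub-round $j$, intermediate node $i$ receives exactly one coordinate from each sender, hence at most $n$ messages, and Fact~\ref{fact: routing} applies. In gather sub-round $j$, a destination $d$ receives all $n$ coordinates of \emph{every} vector that is simultaneously (a) destined to $d$ in this phase and (b) the $j$-th vector in its sender's ordering. With the ``arbitrary ordering'' you stipulate, up to $\ell$ senders may place their $d$-bound vector at the same position $j$, so $d$ receives up to $\ell n$ messages in that sub-round; this is no longer a relaxed IDT, and Lemma~\ref{lem: 1} then charges $O(\ell)$ rounds per sub-round, i.e., $O(\ell^2)$ per phase and $O(\ell^2\log n)$ overall --- precisely the loss you say you are avoiding. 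To repair this you must coordinate the orderings, e.g., properly edge-color the phase's bipartite sender--destination multigraph (maximum degree $\ell+1$, so $\ell+1$ colors suffice by K\H{o}nig's theorem) and use the color as the sub-round index; since only each source initially knows its destination set, you also owe an account of how this coloring is computed and disseminated within the $O(\ell)$-round budget per phase (the paper handles the analogous schedule-dissemination issue explicitly, and trivially, inside each sub-task). As written, your proof establishes only $O(\ell^2\log n)$.
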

\begin{proof}
  We can easily decompose the distribution task into $\ell$ sub-tasks,
  where in the $i$-th sub-task, $i=1,\dots, \ell,$ each node 
  receives only the $i$-th vector among the at most $\ell $ vectors
  it should receive in the distribution task.
  Hence, it is sufficient to show that one can implement such
  a sub-task in $O(\log n)$ rounds. Note that for any two nodes
  $i$ and $j,$ the set of recipients of the vector from
  the node $i$ is disjoint from the set of recipients of the vector
  from the node $j.$

 To implement the sub-task, we shall adhere to the known method of
 doubling in parallel processing. Thus, if a node has more than two
 recipients, they are divided into groups. The first group consists of
 two nodes, the second possibly of four nodes etc.  Each recipient in
 an intermediate group gets also the task of passing further the
 vector to at most two members in the next group.

 To make the nodes aware to which group of recipients of which node
 they belong and to which nodes they should pass the vector further,
 at the beginning, each node sends a message to each of its recipients
 informing that they are its recipients. Next, each node sends a message
 to all other nodes informing them about which node it is an recipient
 of. On the basis of this information, each node finds out to which group
 it belongs to and to which nodes in the next group should pass
 the obtained vector.

   Thus, the sub-task can be implemented in at most $\lceil \log n\rceil $ phases,
   each consisting of $2\times O(1)$ rounds by Lemma \ref{lem: 1}. The upper
   bound of $O(\ell \log n )$ rounds follows.
    \qed
 \end{proof} 

\section{Exact and Approximate Minimum Spanning Tree in $\{ 0,1\}^n$}

First, we shall observe that an exact minimum spanning tree of a set
of $n$ points in the Hamming space $\{ 0,1\}^n$ can be
deterministically computed in
$O(n^{0.158})$ rounds by a composition of the following facts.

\begin{fact}\label{fact: alg}(see Theorem 1 in \cite{CK19})
  The product of two $0-1$ $n\times n$ matrices over the ring
  of integers can be computed on the congested clique
  with $n$ nodes in $O(n^{1-2/\omega +\epsilon })$  rounds,
  for any constant $\epsilon > 0.$
\end{fact}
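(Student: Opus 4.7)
The plan is to reduce the task to routing on the congested clique by coupling a bilinear decomposition of fast matrix multiplication with Lenzen's relaxed Information Distribution Task (Fact~\ref{fact: routing}). Because the inputs are $0$-$1$ and arithmetic is over the integers, every intermediate value is bounded by $n$ and fits in an $O(\log n)$-bit message, so a scalar transfer costs one message.

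First I would invoke a bilinear decomposition: by the definition of $\omega$, for any $\delta>0$ there is an algebraic identity expressing a $k\times k$ matrix product as a sum of $O(k^{\omega+\delta})$ scalar bilinear products, each of the form (fixed linear combination of entries of the left factor) times (fixed linear combination of entries of the right factor). I would apply this identity at the \emph{block} level: partition $A$ and $B$ into $t\times t$ arrangements of $s\times s$ blocks with $t=n/s$, producing an expression for $C=AB$ as an aggregation of $O(t^{\omega+\delta})$ block products of the form $L_\ell(A)\cdot R_\ell(B)$, where $L_\ell$ and $R_\ell$ are fixed linear combinations of blocks. I would then distribute these block products evenly across the $n$ nodes; each node performs its share locally at no communication cost.

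Second I would handle the two routing stages: delivering the operands $L_\ell(A), R_\ell(B)$ to the responsible node, and returning the partial contributions to the nodes holding the relevant output blocks. Because each block typically appears in many $L_\ell$ and $R_\ell$, I would exploit this reuse by multi-destination broadcast (Lemma~\ref{lem: 2}) rather than sending each block once per product. Calibrating the block side as $s=n^{1-2/\omega+\Theta(\epsilon)}$ and tallying the per-node traffic produced by the block-reuse pattern, the entire exchange fits into $O(n^{1-2/\omega+\epsilon})$ batches of the relaxed IDT, each settled in $O(1)$ rounds by Fact~\ref{fact: routing}.

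The hard part will be the combinatorial accounting in the routing step: one must argue that the $O(t^{\omega+\delta})$ block products can be assigned to the $n$ nodes, and the block broadcasts scheduled, so that in every phase each node sends and receives at most $O(n)$ messages and the preconditions of the relaxed IDT are met. All the sub-polynomial overheads, from $\delta$, from the logarithmic broadcast-tree depth in Lemma~\ref{lem: 2}, and from integer rounding of $s$ and of the block counts, must finally collapse into the single $n^\epsilon$ slack in the bound; this accounting is the technical heart of Theorem~1 of~\cite{CK19}.
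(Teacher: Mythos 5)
The paper does not prove this statement at all: it is imported verbatim as Theorem~1 of \cite{CK19} and used as a black box, so there is no internal proof to compare against. Judged on its own, your outline correctly names the strategy of the external proof (a block-level bilinear decomposition of a fast matrix-multiplication algorithm combined with Lenzen-style routing), but as a proof it has a genuine gap, and one concrete error. The gap is that you explicitly defer the entire technical content --- the load-balanced assignment of block products and the scheduling of the operand delivery --- which is precisely what makes the theorem true; an outline that ends with ``this accounting is the technical heart of Theorem~1'' is a pointer to the literature, not a proof.

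The concrete error is the calibration. In \cite{CK19} the block dimension is chosen so that the \emph{number of bilinear products equals the number of nodes}: with $d\times d$ blocks and $O(d^{\omega+\delta})$ products one sets $d\approx n^{1/\omega}$, so each node computes exactly one product $L_\ell(A)\cdot R_\ell(B)$ whose operands have $(n/d)^2=n^{2-2/\omega}$ entries; dividing by the per-round bandwidth of $n$ messages gives the $O(n^{1-2/\omega+\epsilon})$ bound. Your choice of block side $s=n^{1-2/\omega+\Theta(\epsilon)}$ gives $t=n/s\approx n^{2/\omega}$ and hence about $t^{\omega}\approx n^{2}$ block products, i.e., $n$ products per node, and the per-node receive volume then works out to roughly $n^{3-4/\omega}$ elements, i.e., $n^{2-4/\omega}$ rounds --- asymptotically worse than claimed. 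Relatedly, handling block reuse by broadcasting whole raw blocks to every product that uses them (via Lemma~\ref{lem: 2}) inflates the receive load; the argument in \cite{CK19} instead has each node receive only the already-aggregated linear combinations $L_\ell(A)$ and $R_\ell(B)$, with the combining done at (or near) the senders, which is what keeps every node's traffic at $O(n^{2-2/\omega})$ and lets the relaxed IDT of Fact~\ref{fact: routing} be applied round by round.
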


\begin{fact}\label{fact: now}(see Corollary 2.4  in \cite{K_21})
  The minimum spanning tree of an edge weighted graph
  with $n$ vertices can be deterministically computed
  in $O(1)$ rounds on the congested
  clique with $n$ nodes.
\end{fact}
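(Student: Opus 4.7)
Since this statement is cited verbatim from \cite{K_21}, my proposal is a plausible high-level route rather than a reconstruction of the original argument. The plan is to combine Bor\r{u}vka's algorithm with a constant-round edge-sparsification primitive that reduces the $\binom{n}{2}$ potential edges of the input graph to $O(n\,\mathrm{polylog}(n))$ edges while preserving the MST, after which the sparsified graph can be computed on locally at every node.

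First, I would invoke Lenzen's routing (Fact~\ref{fact: routing}) to redistribute the input edges so that each node receives a roughly equal share of $O(n)$ of them. By the cycle property of MSTs (any edge that is the maximum weight on some cycle of the input graph is not in the MST), each node can locally compute a minimum spanning forest on its assigned chunk and safely discard all non-forest edges: each discarded edge lies on a cycle entirely contained in its chunk and is therefore certified non-MST. Repeating with a second partitioning, chosen so that every pair of vertices co-occurs in at least one chunk of bounded size, would shrink the surviving edge set to $O(n\,\mathrm{polylog}(n))$ after a constant number of such sparsification rounds, while still preserving every MST edge.

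Once the surviving edge set has size $O(n\,\mathrm{polylog}(n))$, I would use Lemma~\ref{lem: 1} (or Lemma~\ref{lem: 2}) to broadcast these edges so that every node holds the entire sparsified graph; this takes $O(\mathrm{polylog}(n))$ rounds in the naive reading, so for a true $O(1)$ bound one really needs the sparsifier to have only $O(n)$ surviving edges, which is achievable by iterating the chunk-based pruning a constant number of times with chunks of increasing scope. Each node then computes the MST locally on the sparse graph; by construction this MST coincides with the MST of the original input.

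The main obstacle is making the sparsification both deterministic and strictly $O(1)$-round. Randomized edge sampling yields an easy high-probability bound, but a deterministic $O(1)$-round sparsifier requires an explicit combinatorial pairwise-covering design in which every pair of vertices is contained together in at least one small chunk, together with a careful argument that a constant number of such pruning passes suffices. This combinatorial design — and the routing scheme that realises it in $O(1)$ rounds via Fact~\ref{fact: routing} — is the technical heart of the cited construction, and I would defer to \cite{K_21} for the explicit design and its analysis.
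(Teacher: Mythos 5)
The paper does not actually prove this statement: it is imported as a black box from \cite{K_21} (Corollary 2.4 there), so there is no in-paper proof to compare against, and your sketch must stand on its own merits. It has a genuine gap at exactly the point you flag. The chunk-based pruning you describe --- cover all vertex pairs by small chunks, have each responsible node keep only a local minimum spanning forest of its $O(n)$ assigned edges, and discard the rest by the cycle property --- is the classical sparsification step of Lotker et al. A single such pass with groups of size about $\sqrt{n}$ leaves each of the $n$ pair-handling nodes with up to $O(\sqrt{n})$ forest edges, i.e.\ $\Theta(n^{3/2})$ surviving edges in total, not $O(n\,\mathrm{polylog}(n))$; and iterating it a \emph{constant} number of times does not repair this. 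What makes the iteration converge in the known analyses is a component-merging argument in which the number of components drops quadratically per phase, and that yields $O(\log\log n)$ rounds, not $O(1)$. There is no known way to drive this particular scheme down to a constant number of rounds.

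The actual constant-round algorithms rest on a different mechanism: linear graph sketches that allow a node to recover an outgoing inter-component edge from an $O(\mathrm{polylog}(n))$-bit summary, combined with aggressive component growing; the deterministic version cited here must in addition derandomize those sketches, which is the technical core of \cite{K_21}. Your deferral to the cited paper is the honest move, but the route you defer on (pairwise-covering designs plus cycle-property pruning) is not the route that reaches $O(1)$; as written, your argument would at best establish an $O(\log\log n)$-round bound.
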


\begin{theorem}\label{theo: exact}
  A minimum spanning tree of a set of $n$ points in the Hamming space
  $\{ 0,1\}^n$ can be deterministically computed  on the congested
  clique with $n$ nodes  in $O(n^{1-2/\omega +\epsilon })$  rounds,
  for any constant $\epsilon > 0$ ( under the assumption that for
  $i=1,\dots,n,$ the $i$-th node holds the $i$-th input point).
\end{theorem}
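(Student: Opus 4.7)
The plan is to reduce the Hamming MST problem to a single integer matrix product followed by a black-box invocation of Fact~\ref{fact: now}. The key identity is that for Boolean vectors $p,q\in\{0,1\}^n$,
\[
h(p,q)\ =\ \|p\|_1 + \|q\|_1 - 2\langle p,q\rangle,
\]
so once each node knows all the inner products $\langle p_i,p_j\rangle$ for $j\in[n]$ together with the $n$ Hamming weights $\|p_j\|_1$, it can locally recover the weights of all its incident edges in the complete Hamming-distance graph $K_n$ on the input points.

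First, let $P$ be the $n\times n$ Boolean matrix whose $i$-th row is the input point $p_i$, held initially at node $i$. Each node $i$ computes $\|p_i\|_1$ locally and broadcasts it (an $O(\log n)$-bit value) to every other node; this is a single round. In order to apply Fact~\ref{fact: alg} to $P\cdot P^t$ over the integers, we first redistribute so that node $i$ also holds the $i$-th row of $P^t$, i.e.\ the $i$-th column of $P$. This is a relaxed Information Distribution Task in which each node sends and receives exactly $n$ coordinate bits, so by Fact~\ref{fact: routing} it completes in $O(1)$ rounds. Applying Fact~\ref{fact: alg} then outputs $P\cdot P^t$ with node $i$ holding its $i$-th row in $O(n^{1-2/\omega+\epsilon})$ rounds.

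After this step every node $i$ can locally compute $h(p_i,p_j)$ for all $j\in[n]$, using the broadcast Hamming weights and its row of $PP^t$. The edge-weighted clique $K_n$ with Hamming distances as weights is now distributed in exactly the format required by Fact~\ref{fact: now} (each node holds the weights of its incident edges), so a single invocation of that fact yields an exact MST in $O(1)$ additional rounds. The overall round complexity is dominated by the matrix multiplication step, giving the claimed $O(n^{1-2/\omega+\epsilon})$ bound.

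I do not expect a genuine obstacle here: the argument is a composition of two black-box results on the congested clique, and the only gluing detail is producing the row-wise distribution of $P^t$ from the row-wise distribution of $P$, which is a constant-round routing task handled by Fact~\ref{fact: routing}.
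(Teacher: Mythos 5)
Your proposal is correct and follows essentially the same route as the paper: compute all pairwise Hamming distances by invoking the integer matrix multiplication protocol of Fact~\ref{fact: alg} on $P$ and $P^t$, then feed the resulting edge-weighted clique to the $O(1)$-round MST protocol of Fact~\ref{fact: now}. The only (immaterial) difference is that you recover $h(p_i,p_j)$ from a single product $PP^t$ plus broadcast Hamming weights via $h(p,q)=\|p\|_1+\|q\|_1-2\langle p,q\rangle$, whereas the paper computes two products, $PP^t$ and $\bar{P}\bar{P}^t$, and uses $h(p_i,p_j)=n-c_{ij}-\bar{c}_{ij}$.
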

\begin{proof}
  First, we shall compute the Hamming distances between all pairs
  of input points. Let $P$ be the $0-1$ $n\times n$ matrix,
  where for $i=1,...,n,$ the $i$-th row of $P$ is
  the $i$-th input point (vector) $p_i$ held in
  the $i$-th node of the clique. By using the $O(n^{1-2/\omega +\epsilon })$-round
  protocol from
  Fact \ref{fact: alg}, we compute the product $C=PP^t$
  so for $i=1,...,n,$ the $i$-th node gets the $i$-th row
  of the product $C.$ Next, let $\bar{P}$ be the $0-1$
  $n\times n$ matrix obtained from $P$ by flipping
  the values of the entries in $P,$ i.e., ones become zeros
  and {\em vice versa}. In a similar manner, we compute
  the product $\bar{C}=\bar{P}\bar{P^t}$ by using
  the protocol from Fact \ref{fact: alg}.
  Note that for $i,\ j \in [n],$ the Hamming distance
  distance between the points $p_i$ and $p_j$ is
  equal to $n$ minus the number of coordinates these
  points share ones and minus the number of coordinates
  they share zeros, i.e., $n-c_{ij}-\bar{c}_{ij}.$
  In result, each node can compute the Hamming
  distances between its input
  point and all other input points so the $O(1)$-round protocol
  for an MST from Fact \ref{fact: now} can be applied.
  \qed
\end{proof}
  
The following fact and remark enable
an efficient randomized dimension reduction 
for the purpose of computing approximate nearest
neighbors and approximate minimum spanning tree for
a set $n$ input points (vectors) in $\{0,\ 1\}^n$
on the congested clique.

\begin{fact}\label{fact: R}\cite{KOR00}
Fix the error parameter $\epsilon \in (0, 1/2),$ dimensions $k, d \ge 1,$
and scale $r\in [1, d].$
For any $k \ge 1,$ there exists a randomized map $f : \{0, 1\}^d \rightarrow \{0, 1\}^k$ and an absolute
constant $C > 0,$
satisfying the following inequalities for any fixed $x, y \in \{0, 1\}^d :$
\begin{description}
\item
if $h(x,y)\le r,$ then $Pr[h(f (x),f (y))\le k/2] \ge  1-e^{C\epsilon^2 k}$;
\item
if  $h(x,y)\ge (1 + \epsilon)r,$  then $Pr[h(f (x),f (y))> (1+\epsilon/2)k/2]\ge  1-e^{C\epsilon^2 k}$.
\end{description}
\end{fact}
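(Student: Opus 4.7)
The plan is to build $f = (f_1, \dots, f_k)$ as a concatenation of $k$ independent copies of a single-bit randomized hash $\phi : \{0,1\}^d \to \{0,1\}$, and then read off the stated tail inequalities from a Chernoff concentration argument. The essential work is engineering $\phi$ so that $p(s) := \Pr[\phi(x) \neq \phi(y)]$ depends only on $s = h(x,y)$, is monotone non-decreasing in $s$, and straddles the decision threshold $1/2$ between the two relevant scales with an $\Omega(\epsilon)$ margin on each side --- concretely, $p(r) \le 1/2 - \Omega(\epsilon)$ and $p((1+\epsilon)r) \ge 1/2 + \Omega(\epsilon)$.

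A natural base construction is the random-subset parity $\phi(x) = \bigoplus_{j \in S} x_j$, with each coordinate included in $S$ independently with probability $q \approx 1/(2r)$, giving the closed form $p(s) = \tfrac12\bigl(1 - (1 - 2q)^s\bigr)$. This already achieves the desired $\Omega(\epsilon)$ gap between $s = r$ and $s = (1+\epsilon)r$ at the correct scale, but $p(s) < 1/2$ always, so parity alone cannot cross the threshold. To shift $p$ across $1/2$, I would combine the parity hash with a second primitive that contributes a scale-proportional bias --- for instance a randomized-threshold test on the Hamming weight of $x$ restricted to a random subset of size $\Theta(d/r)$, or a Kushilevitz--Ostrovsky--Rabani style ``approximate equality'' device --- tuned so that the composite bias curve crosses $1/2$ strictly between $r$ and $(1+\epsilon)r$ with an $\Omega(\epsilon)$ margin on each side.

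Once $\phi$ is fixed, the remainder is routine. Writing $h(f(x),f(y)) = \sum_{i=1}^k Y_i$ with $Y_i = \mathbf{1}[\phi_i(x) \neq \phi_i(y)]$ presents it as a sum of $k$ independent Bernoulli$(p(s))$ indicators, whence the Chernoff bound $\Pr\bigl[\,|h(f(x),f(y)) - k p(s)| > c\epsilon k\,\bigr] \le 2\, e^{-\Omega(\epsilon^2 k)}$ yields both conclusions with an absolute constant $C > 0$: the close case uses the deviation $k/2 - k p(r) = \Omega(\epsilon k)$, and the far case uses $k p((1+\epsilon)r) - (1+\epsilon/2) k/2 = \Omega(\epsilon k)$.

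The main obstacle is the design of $\phi$ itself. A pure parity hash caps at $p(s) < 1/2$ and so cannot push the expected image distance above $k/2$; some additional mechanism is required to lift the bias curve across $1/2$ without destroying the $\Omega(\epsilon)$ gap, and the parameters of that mechanism must track the given scale $r$ uniformly over all admissible values $r \in [1,d]$, including extremes such as $r = 1$ or $r$ close to $d$.
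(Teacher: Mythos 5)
The paper offers no proof of this statement at all: it is imported verbatim as a fact from \cite{KOR00} (via the restatement in \cite{AIR18}), with the construction only sketched in Remark~\ref{rem: 1}. So your attempt can only be judged on its own merits, and there it has a genuine, self-acknowledged gap: the entire content of the lemma is the existence of a single-bit hash $\phi$ whose disagreement probability $p(s)=\Pr[\phi(x)\neq\phi(y)]$ depends only on $s=h(x,y)$ and straddles the decision threshold with an $\Omega(\epsilon)$ margin on each side; you never construct such a $\phi$. The concatenate-and-Chernoff wrapper is routine and correct, and your formula $p(s)=\tfrac12\bigl(1-(1-2q)^{s}\bigr)$ for subset-parity is right, but the proposed rescues (a randomized threshold on the Hamming weight of a random restriction, or an unspecified ``approximate equality device'') do not produce a disagreement probability that is a function of $s$ alone --- for a weight-threshold test, $\Pr[\phi(x)\neq\phi(y)]$ depends on the individual weights of $x$ and $y$ on the sampled coordinates, not just on where they differ --- so the i.i.d.\ Bernoulli$(p(s))$ reduction, and with it the Chernoff step, collapses.

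The obstruction you ran into is, moreover, not an artifact of your choice of $\phi$: for the $GF(2)$ random projection $f(x)=Ax$ of Remark~\ref{rem: 1} one has $p(s)=\tfrac12\bigl(1-(1-2\delta)^{s}\bigr)<\tfrac12$ for every $s$ and every $\delta\le 1/2$, so $E[h(f(x),f(y))]<k/2$ always and the second bullet, whose threshold $(1+\epsilon/2)k/2$ exceeds $k/2$, cannot hold for that map. What \cite{KOR00} actually proves keeps both bias levels below $1/2$: with $\delta=\Theta(1/r)$ one gets $p(r)\le\delta_1$ and $p((1+\epsilon)r)\ge\delta_1+\Omega(\epsilon)$ for $\delta_1\approx\tfrac{1-e^{-1}}{2}$, and the acceptance cutoff is $\bigl(\delta_1+\Theta(\epsilon)\bigr)k$ rather than $k/2$. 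So rather than trying to force the bias curve across $1/2$ (which subset-parity provably cannot do), the correct route is to prove the fact with both thresholds at $\delta_1 k+\Theta(\epsilon k)$; this version is what the dichotomy argument in Lemma~\ref{lem: ineq} really needs, provided the test $h(f_r(p_i),f_r(p_j))\le k/2$ in Step~3 of $HMST$ is adjusted to the same cutoff.
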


\begin{remark}\cite{AIR18}\label{rem: 1}
  The map f can be constructed via a random projection over
  $GF (2)$ by taking $f (x) = Ax,$
  where $A$ is a $k\times d$ matrix for $k = O(\log n/\epsilon^2)$,
  with each entry being $1$ with some fixed probability $\delta$
and $0$ otherwise. The probability $\delta$ depends solely on $r.$
\end{remark}

The following protocol constructs an approximate minimum
spanning tree (MST) of  a set of $n$ points (vectors) in
the Hamming space $\{0,1\}^n$ on the congested clique.
It relies on Fact \ref{fact: R} and Remark \ref{rem: 1}.
 Each node of the clique sends a logarithmic number
of random projections of the point assigned to it into an $O(\log
n)$-dimensional space to the first node. On the
basis of these projections, the latter node constructs locally an
approximate MST of the $n$ input points, where the $i$-th point is
represented by $i,$ for $i=1,\dots ,n$.

\par
   \vskip 5pt
    \noindent
        {\bf protocol} $HMST (S)$
        \par
        \noindent
            {\em Input}: A set $S$ of $n$ points (vectors) in $\{0,1\}^n,$
            each node $i$ holds a distinct input point $p_i$.
            \par
            \noindent
                {\em Output}: An approximate MST of $S$
                held in the first node, where the vertices
                of the tree are represented by the indices of the corresponding points.
                \begin{enumerate}

                \item
                  The first node picks enough large $k=O(\log n /\epsilon^2)$,
                  for $\epsilon$ close to $\frac 12 $, and 
                  generates the random $k\times n$     $0-1$ matrices 
                $A_r$ for $r=1,\ 2, 4,,\dots, 2^{\lceil \log n \rceil} ,$
                  defining the functions $f_r$ by $f_r(x)=A_rx$
                  (see Remark \ref{rem: 1}).
                Next, it sends the matrices to all other nodes.
                \item
                Each node $i$ computes the values of $f_rs$ for its point
                $p_i,$ i.e., it computes $A_rp_i$ for $r=1, 2, 4,\dots, 2^{\lceil \log n \rceil}$,
                and sends the results to the first node.
                \item
                The first node estimates $h(p_i,p_j)$ for each pair of input points
                $p_i$ and $p_j$ as follows. It picks the smallest $r\in \{1,2,\dots,
                 2^{\lceil \log n \rceil}$ \}such that $h(f_r(p_i),f_r(p_j))\le k/2$.
                 The first node estimates $h(p_i,p_j)$ by $r$ via setting
                 $w_{ij} =r.$
                 Next, it finds a minimum
                 spanning tree $T$ in the auxiliary graph on $\{1,2,3,\dots,n\}$,
                 where the weight of the edges $(i,j)$ is $w_{ij}.$
                 The output approximate minimum spanning tree of $S$ is set to $T$,
                 where the vertex $i$ represents the input point $p_i,$ for
                 $i=1,\dots, n.$
                 \junk{ The first node computes an approximate MST of $S$ by using
                the obtained values $f_r(p_i)$ and for example following
                Boruvka's algorithm. Thus, it starts from singleton
                components, forming an edge free  forest $F$, and runs $O(\log n)$ phases. At the beginning of
                each phase, for each current component, it searches for
                an approximate shortest edge to another component as follows.
                For $r=1,2,\dots ,$ it checks if there is an edge $(u,v)$
                connecting the component with another one such that
                $||f_r(u)-f_r(v)||_1\le r.$ If so then it picks such a connection
                minimizing $||f_r(u)-f_r(v)||_1.$ The phase ends by adding
                the picked connections to $F$
                and merging newly connected
                components. The output tree is set to $F.$}
                \end{enumerate}
\begin{lemma}\label{lem: ineq}
Assume the notation from the protocol $HMST(S).$ For each pair
of input points $p_i,\ p_j,$
the following inequalities  hold  w.h.p.:

$$w_{ij}/2\le h(p_i,p_j)+1$$
$$ h(p_i,p_j)\le 1.5w_{ij}.$$
\end{lemma}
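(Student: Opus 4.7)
The plan is to exhibit both inequalities as direct consequences of Fact \ref{fact: R}, applied with the error parameter $\epsilon$ chosen close to $1/2$ and the target dimension $k=O(\log n/\epsilon^2)$ taken large enough so that each of the two probabilistic guarantees fails with probability at most $1/n^{c}$ for a suitably large constant $c$. I would first fix such parameters and then take a single union bound over all $\binom{n}{2}$ pairs of input points and all $O(\log n)$ scales $r\in\{1,2,4,\dots,2^{\lceil \log n\rceil}\}$, so that with high probability both bounds of Fact \ref{fact: R} hold simultaneously for every pair and every scale used by the protocol. The rest of the argument then becomes deterministic, conditioned on this high-probability event.

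For the upper bound $h(p_i,p_j)\le 1.5 w_{ij}$, I would use the contrapositive of the second clause of Fact \ref{fact: R} with $\epsilon=1/2$. By construction, $w_{ij}=r$ is a scale at which the test $h(f_r(p_i),f_r(p_j))\le k/2$ succeeds. If we had $h(p_i,p_j)\ge (1+1/2)r = 1.5\,w_{ij}$, then the second clause of Fact \ref{fact: R} would force $h(f_r(p_i),f_r(p_j))>(1+1/4)k/2>k/2$ under the good event, contradicting the success of the test at scale $r$. Hence $h(p_i,p_j)<1.5 w_{ij}$, which yields the stated inequality.

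For the lower bound $w_{ij}/2\le h(p_i,p_j)+1$, I would use the first clause of Fact \ref{fact: R}. Let $s$ be the smallest element of $\{1,2,4,\dots,2^{\lceil\log n\rceil}\}$ that is at least $\max(h(p_i,p_j),1)$. Because consecutive scales differ by a factor of two, one has $s\le 2\max(h(p_i,p_j),1)\le 2(h(p_i,p_j)+1)$. The first clause of Fact \ref{fact: R} applied at scale $s$ guarantees, under the good event, that $h(f_s(p_i),f_s(p_j))\le k/2$, so the test in Step 3 of $HMST(S)$ succeeds at scale $s$. Since $w_{ij}$ is chosen as the smallest successful scale, $w_{ij}\le s\le 2(h(p_i,p_j)+1)$, which is the desired inequality.

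The only delicate point is the choice of $\epsilon$ and $k$ so that the failure probabilities add up to $o(1)$ after the union bound; once $\epsilon$ is fixed at $1/2$, the exponent $e^{-C\epsilon^2 k}$ (with the missing sign in Fact \ref{fact: R} obviously being a typo) can be driven below $1/n^{c}$ by taking the hidden constant in $k=O(\log n/\epsilon^2)$ large enough, after which the union bound over $O(n^2\log n)$ events is straightforward. Everything else is elementary arithmetic on powers of two.
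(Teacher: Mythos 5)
Your proposal is correct and follows essentially the same route as the paper's proof: both inequalities are read off from the two clauses of Fact \ref{fact: R} (with $\epsilon$ near $1/2$ and $k=O(\log n/\epsilon^2)$ large enough) combined with the minimality of the chosen scale $w_{ij}$. The only cosmetic differences are that you argue the lower bound directly via the smallest power of two above $\max(h(p_i,p_j),1)$ where the paper instead derives a contradiction at scale $w_{ij}/2$, and that you make explicit the union bound over all $O(n^2\log n)$ pair--scale events, which the paper leaves implicit in its ``w.h.p.'' statements.
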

\begin{proof}
Suppose tat $w_{ij}\ge 2$ and $h(p_i,p_j)< w_{ij]}/2$ holds.
Then, the inequality $h(f_r(p_i),f_r(p_j))\le k/2$
should hold already for $r=w_{ij}/2$ w.h.p. by the first inequality in 
Fact \ref{fact: R}
with $\epsilon$ close to $\frac 12$ and  sufficiently large $k=O(\log n/ \epsilon^2).$
We obtain a contradiction with the choice of $w_{ij}$ w.h.p.
If $w_{ij} < 2$ then$w_{ij}/2\le h(p_i,p_j)+1$ trivially holds. 
Suppose in turn that $h(p_i,p_j)> 1.5w_{i j}$ holds.
Then, $h(f_r(p_i),f_r(p_j))\ge (1+\frac 15)k/2$ holds
for $r=w_{ij}$ w.h.p.
by the second inequality in Fact \ref{fact: R} with  $\epsilon$
close to $\frac 12$ and $k=O(\log n/ \epsilon^2).$
We obtain again a contradiction with the choice of $w_{ij}$ w.h.p.
\qed
\end{proof}

\begin{lemma}\label{lem: approx}
Let $T$ be the approximate minimum spanning tree
of $S$ constructed by the protocol $HMST(S).$
Consider a minimum spanning tree $U$ of $S$ in
the Hamming metric. Then, the following inequality holds w.h.p.:

$$\sum_{(i,j)\in T}h(p_i,p_j)\le 3 \sum_{(i,j)\in U} h(p_i,p_j) +3n-3$$
\end{lemma}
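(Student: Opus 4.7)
The plan is to chain together the two inequalities from Lemma \ref{lem: ineq} using the fact that the tree $T$ is chosen to be a minimum spanning tree with respect to the weights $w_{ij}$, not with respect to the true Hamming distances.

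First I would bound the Hamming cost of $T$ by its $w$-cost. Applying the second inequality of Lemma \ref{lem: ineq} edge-by-edge gives, w.h.p.,
\[
\sum_{(i,j)\in T} h(p_i,p_j)\ \le\ \tfrac{3}{2} \sum_{(i,j)\in T} w_{ij}.
\]
Next, since the first node builds $T$ as an \emph{exact} MST of the auxiliary graph with weights $w_{ij}$, and $U$ is some spanning tree of the same vertex set (with possibly different weights), $T$ beats $U$ in $w$-weight:
\[
\sum_{(i,j)\in T} w_{ij}\ \le\ \sum_{(i,j)\in U} w_{ij}.
\]
Finally, the first inequality of Lemma \ref{lem: ineq} (rewritten as $w_{ij}\le 2h(p_i,p_j)+2$) applied to each of the $n-1$ edges of $U$ yields
\[
\sum_{(i,j)\in U} w_{ij}\ \le\ 2\sum_{(i,j)\in U} h(p_i,p_j)\ +\ 2(n-1).
\]
Chaining the three bounds gives
\[
\sum_{(i,j)\in T} h(p_i,p_j)\ \le\ \tfrac{3}{2}\Bigl(2\sum_{(i,j)\in U} h(p_i,p_j)+2(n-1)\Bigr)\ =\ 3\sum_{(i,j)\in U}h(p_i,p_j)+3n-3,
\]
which is the claimed bound.

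There is no real obstacle here beyond being careful about the high-probability qualifier: Lemma \ref{lem: ineq} is stated for a single fixed pair $(i,j)$, so to apply it simultaneously to all edges of $T$ and $U$ (at most $2(n-1)$ edges in total, all determined before the inequalities are invoked) I would take a union bound. Choosing the constant in $k=O(\log n/\epsilon^2)$ sufficiently large makes each failure probability polynomially small, so the union bound still leaves the conjunction of all relevant inequalities holding w.h.p., which justifies the edge-wise application used above.
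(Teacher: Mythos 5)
Your proof is correct and follows essentially the same route as the paper's: bound the Hamming cost of $T$ by $1.5$ times its $w$-cost, use the exact optimality of $T$ for the weights $w_{ij}$ to compare against $U$, and then convert the $w$-cost of $U$ back to Hamming cost via $w_{ij}\le 2h(p_i,p_j)+2$ over the $n-1$ edges. Your explicit union-bound remark for the w.h.p.\ qualifier is a detail the paper leaves implicit, but it does not change the argument.
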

\begin{proof}
By Lemma \ref{lem: ineq} and the optimality of the MST tree $T$ in the graph on $[n]$ with the
weights $w_{ij}$, we obtain the following chain of inequalities w.h.p.
$$\sum_{(i,j)\in T}h(p_i,p_j)\le
1.5\sum_{(i,j)
  \in T}  w_{ij} $$
$$\le 1.5\sum_
{(i,j)\in U}  w_{ij}\le 3\sum_{(i,j)\in U} h(p_i,p_j) +3n-3$$.
\qed
\end{proof}

\begin{lemma}\label{lem: hmstrounds}
  The protocol $HMST(S)$ can be implemented in $O(\log^3 n)$
  rounds on the congested clique with $n$ nodes.
\end{lemma}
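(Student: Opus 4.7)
My plan would be to bound the round complexity of each of the three steps of the protocol $HMST(S)$ separately, invoking the routing primitives of Lemmas \ref{lem: 1} and \ref{lem: 2}.

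I would start with Step 1, where the first node must deliver the $O(\log n)$ random matrices $A_r$ to every other node. Each $A_r$ is $k\times n$ over $GF(2)$ with $k=O(\log n/\epsilon^2)=O(\log n)$, hence comprises $O(n\log n)$ bits, which pack into $O(n)$ messages of $O(\log n)$ bits. Viewing the first node as the unique source of $O(\log n)$ vectors, each of at most $n$ messages, every other node must receive $\ell=O(\log n)$ such vectors, so Lemma \ref{lem: 2} would distribute them in $O(\ell\log n)=O(\log^2 n)$ rounds.

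For Step 2, each node $i$ first locally computes the $O(\log n)$ projections $f_r(p_i)=A_r p_i\in\{0,1\}^k$, each of which fits in $O(1)$ messages of $O(\log n)$ bits. Thus every node sends at most $O(\log n)$ messages, all to the first node, which receives $O(n\log n)$ messages in total. Applying Lemma \ref{lem: 1} with $k=1$ and $\ell=O(\log n)$ would implement this collection in $O(\log n)$ rounds. Step 3 is entirely local at the first node, which estimates $w_{ij}$ for all pairs from the projections it already holds and runs a sequential MST algorithm on the auxiliary weighted graph $([n],w)$. Summing the contributions would yield the claimed $O(\log^3 n)$ bound with comfortable slack.

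I expect the main obstacle to be Step 1, since each single matrix already exceeds the first node's per-round outgoing capacity and a naive broadcast would cost $\Omega(n)$ rounds; the saving comes entirely from the doubling scheme behind Lemma \ref{lem: 2}, which amortizes the broadcast via intermediate nodes. A minor sanity check I would also perform is to confirm that the $O(\log n)$ matrices together with the $O(\log n)$ scales do not force the per-node-per-round bandwidth to be exceeded when applying Lemma \ref{lem: 2}, which is immediate from the bound $\ell=O(\log n)$ on the number of incoming vectors per node.
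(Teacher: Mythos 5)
Your proposal is correct and follows essentially the same route as the paper: bound each of the three steps separately, using Lemma~\ref{lem: 2} for the broadcast from the first node in Step 1, a routing bound for the collection at the first node in Step 2, and noting that Step 3 is purely local. The only substantive difference is bookkeeping: you pack the $O(\log n)$ bits of matrix entries into single $O(\log n)$-bit messages, so each $k\times n$ matrix becomes $O(1)$ vectors of at most $n$ messages and Step 1 costs $O(\log^2 n)$ rounds, whereas the paper treats each of the $O(k\log n)$ matrix rows as a separate vector of $n$ one-bit messages and applies Lemma~\ref{lem: 2} once per row, paying $O(k\log^2 n)=O(\log^3 n)$; similarly your use of Lemma~\ref{lem: 1} gives $O(\log n)$ for Step 2 versus the paper's $O(\log^2 n)$. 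Your accounting thus actually supports the stronger bound $O(\log^2 n)$, comfortably within the claimed $O(\log^3 n)$. One small caveat: Lemma~\ref{lem: 2} is stated with each node sending a single vector and $\ell$ counting the number of distinct \emph{source nodes} per recipient, not the number of vectors received; since all $O(\log n)$ vectors in Step 1 originate at node~1, you should formally invoke the lemma $O(\log n)$ times (once per matrix, with $\ell=1$) rather than once with $\ell=O(\log n)$ — the resulting bound is the same $O(\log^2 n)$, so this is a presentational point, not a gap.
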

\begin{proof}
  In Step 1, the first node has to send $O(k \log n )$ vectors
  of $n$ messages to each other node. By applying Lemma \ref{lem: 2}
  $O(k \log n)$ times, this can be done in $O(k\log^2 n)$ rounds, i.e.,
  $O(\log^3 n)$ rounds by the choice of $k=O(\log n).$
  In Step 2, each node has to send $O(k \log n)$ messages to the first
  node.  This can be easily done in $O(k\log n)=O(\log^2 n)$ rounds.
    Step 3 is done locally.
  \qed
\end{proof}
\junk{
\begin{lemma}\label{lem: hmstwork}
                    The upper bound on the number of rounds necessary to
                    implement the protocol $HMST$
                    specified in Lemma \ref{lem: hmstrounds} requires
                    $\tilde{O}(n^2)$ work.
                  \end{lemma}
                  \begin{proof}
                    It is easily seen that each node in each of the three steps
                    of the protocol (inclusive the application
                    of Lemma \ref{lem: 2} and indirectly Fact
                    \ref{fact: routing}), with the exception of 
                    the first node in the third
                    step, needs to perform solely $\tilde{O}(n)$ work
                    within the upper bounds on the number of rounds
                    described in the proof of Lemma \ref{lem: hmstrounds}.
                    It remains to observe that the first node can locally
                    estimate the distances  $h(p_i,p_j)$ and construct the
                    minimum spanning tree $T$in the third step
                    in $\tilde{O}(n^2)$sequential time \cite{AHU}.
                    \qed
                  \end{proof}}
                 
Lemmata \ref{lem: approx}, \ref{lem: hmstrounds}
yield our main result in this section.

\begin{theorem}\label{theo: hmst}
  The protocol $HMST(S)$ returns a spanning tree of $S$
  that w.h.p. approximates the minimum spanning tree of $S$
  in the Hamming space $\{0,1\}^n$ within a multiplicative
  factor $3$
  and an additive factor $3n-3$ in
  $O(\log^3 n)$
  rounds
  on the congested clique with $n$ nodes.
\end{theorem}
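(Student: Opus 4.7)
The plan is to assemble the theorem as a direct composition of the two preceding lemmas: Lemma \ref{lem: approx} supplies the approximation guarantee, and Lemma \ref{lem: hmstrounds} supplies the round complexity. Since the auxiliary graph on $[n]$ used in Step~3 of protocol $HMST(S)$ is complete (every pair $(i,j)$ receives a weight $w_{ij}$), the local MST computed by the first node is a genuine spanning tree on $[n]$, and identifying vertex $i$ with the input point $p_i$ yields a spanning tree $T$ of $S$. Thus the structural claim (that the output is a spanning tree of $S$) is essentially free from the construction itself, and what remains is to combine the quantitative guarantees.

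For the approximation bound, I would simply invoke Lemma \ref{lem: approx}, which states that with high probability
\[
\sum_{(i,j)\in T} h(p_i,p_j)\le 3\sum_{(i,j)\in U} h(p_i,p_j)+3n-3,
\]
where $U$ is a true Hamming MST of $S$. The only subtle point worth flagging is that Lemma \ref{lem: approx} was stated for a fixed pair via Lemma \ref{lem: ineq}, whose two inequalities hold w.h.p.\ for each pair $(i,j)$ by Fact \ref{fact: R}. Since there are at most $\binom{n}{2}$ pairs and Fact \ref{fact: R} gives a failure probability exponentially small in $k=O(\log n/\epsilon^2)$ (for $\epsilon$ close to $1/2$), a union bound over all pairs and over the $O(\log n)$ scales $r\in\{1,2,4,\dots,2^{\lceil\log n\rceil}\}$ still leaves an overall failure probability of $n^{-\Omega(1)}$ provided the hidden constant in $k$ is chosen large enough. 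So the w.h.p.\ qualifier really is global and not just pointwise.

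For the round complexity, I would cite Lemma \ref{lem: hmstrounds} directly: Step~1 costs $O(\log^3 n)$ rounds via $O(k\log n)$ applications of Lemma \ref{lem: 2}, Step~2 costs $O(\log^2 n)$ rounds of relaxed routing, and Step~3 is entirely local at the first node. The dominant term is $O(\log^3 n)$, matching the statement of the theorem.

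I do not expect any real obstacle here, since both ingredients are already proved. If any part deserves extra care, it is the union bound argument alluded to above — making explicit that the constant hidden in $k=O(\log n/\epsilon^2)$ must be chosen so that the exponent in $e^{-C\epsilon^2 k}$ exceeds the logarithm of the number of (pair, scale) events one is union-bounding over. With that calibration, the bound of Lemma \ref{lem: approx} holds for all pairs simultaneously w.h.p., and the theorem follows.
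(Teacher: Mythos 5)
Your proposal matches the paper's own proof essentially verbatim: the paper likewise observes that the auxiliary weighted graph is well defined so its MST induces a spanning tree of $S$, then cites Lemma \ref{lem: approx} for the approximation guarantee and Lemma \ref{lem: hmstrounds} for the round bound. Your explicit union-bound remark over all pairs and scales is a sensible clarification of the ``w.h.p.'' qualifier that the paper leaves implicit, but it does not change the approach.
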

\begin{proof}
  The weighted graph constructed by the first node is well defined
  so the minimum spanning tree of the graph induces a spanning tree
  of $S.$ The induced spanning tree approximates the minimum
  spanning tree of $S$ in $\{0,1\}^n$ as stated in the theorem
  by Lemma \ref{lem: approx} . The upper bound on
  the number of rounds follows
  from Lemma \ref{lem: hmstrounds}.
\qed
\end{proof}

\section{Boolean Matrix Multiplication for Clustered Matrices on Congested Clique}
Our basic  protocol for the Boolean matrix product of two $n\times n$ Boolean
matrices on the congested clique is intended to provide a good upper
bound on the required number of rounds when the rows of the first matrix $A$
are highly clustered. It relies on the protocol $HMST$ for an
approximate minimum spanning tree in $\{0,\ 1\}^n$ applied to the rows
of the matrix $A.$ Thus, each node of the clique sends a logarithmic number
of random projections of the row assigned to it into an $O(\log
n)$-dimensional space (see Fact \ref{fact: R} and Remark \ref{rem: 1}) to the first node. On the
basis of these projections, the latter node constructs locally an
approximate MST $T$ of the rows of $A$, where the $i$-th row is
represented by $i,$ for $i=1,\dots ,n$ (see Theorem \ref{theo: hmst}).
The first node sends the tree $T$ to all other nodes.
Then, each node determines  the same
traversal $T'$ of $T$. Subsequently, each node
finds out the same decomposition of $T'$ into an appropriate number of
blocks of a balanced Hamming cost (i.e., the sum of Hamming
distances between the endpoints of edges in the traversal block). It
also determines the same decomposition of the second matrix $B$ into blocks of
almost the same number of consecutive columns of $B.$ Furthermore,
each node finds out a single pair consisting of a block of $T'$
and a block of $B$ that will be assigned to it. Then,
each node receives the row of $A$ corresponding to the start vertex
of the traversal block in its pair and the columns of $B$ belonging to
the block of $B$ in its pair.  After that, the node reconstructs the
rows of $A$ corresponding to the vertices in the traversal block and
computes their inner products with all the columns in the block of
$B.$ Finally, the resulting values of the corresponding entries of the
Boolean matrix product of $A$ and $B$ are send to appropriate nodes so the
$i$-th node holds the $i$-th row of the product matrix.

   \par
   \vskip 5pt
    \noindent
        {\bf protocol} $CLUS-MAT(A,B)$
        \par
        \noindent
            {\em Input}: Two $n\times n$ Boolean matrices $A,\ B$,
            each node holds a batch of $2n$ distinct entries
            of $A$ or $B.$
            \par
            \noindent
                {\em Output}:The Boolean matrix product $C$ of $A$ and $ B,$
                the $i$-th node holds the $i$-th row of $C.$

                \begin{enumerate}
               \item
                 The input is rearranged so for $i=1,...,n,$ the $i$-th node holds the $i$-th row
                 of $A$
                 (denoted by $a_i$)
                 and the $i$-th column of $B$.
               \item
                 The protocol $HMST(A)$ is called in order to construct
                 an approximate minimum spanning tree $T$ of the rows
                 of $A$ in the Hamming space $\{0,1\}^n$ in the first node.
                 In the tree $T,$ the points $p_i$ are represented by the
                 indices $i.$
               \item
                 The first node sends the tree $T$ to all other nodes.
              \item
                For $i=1,\dots , n,$ the $i$-th node sends the $i$-th
                row of $A$ to each node $j$, where one of the
                endpoints of the $j$-th edge of $T$ is $i.$
              \item
                For $i=1,\cdot, n-1,$ if the $i$-th edge of $T$
                is $(k,\ell),$ the $i$-th node determines locally
                all the witnesses of the Hamming distance between
                the $k$-th and $l$-th row of $A$.
                Next, it sends the value of the Hamming distance
                between these two rows to all the other nodes.
              \item
                Each node determines the same traversal $T'$ of
                $T$ (having $< 2n$ directed edges).  Next, it 
                computes the cost $M$ of
                the traversal $T'$ in
                the Hamming space $\{0,1\}^n,$ i.e., $M=\sum_{(i,j)\in
                  T'}h(a_i,a_j)$. Then, on the basis of $n$
                and $M$, each node determines the same decomposition of the traversal
                $T'$ into $t=\lceil \sqrt {\frac M n +1}\rceil $ blocks of consecutive
                edges, each block of of total cost $O(M/t).$  It also determines
                the same decomposition of the matrix $B$ into $\lfloor n/ t \rfloor$ blocks
                of consecutive at most $t+1$ columns of $B.$
                Furthermore, each node determines the same assignment
                of each of the at most $t\times \lfloor n/ t \rfloor$
                pairs consisting of a block of $T'$ and a block of $B$
                to distinct nodes. 
              \item
                For $i=1,\cdot, n,$ the $i$-th node sends the $i$-th row
                of $A$ to each node assigned a pair of blocks, where
                the block of the traversal $T'$ begins with
                the $i$-th row of $A$.
              \item
                For $i=1,\cdot, n,$ the $i$-th node
                sends the witnesses of the Hamming distance between
                the rows of $A$ corresponding to the endpoints of
                the $i$-th edge of $T$ to all nodes assigned a pair
                of blocks, where the block of traversal of $T'$
                includes the $i$-th edge.
                \item
                 For $i=1,\cdot, n,$ the $i$-th node sends the $i$-th column
                of $B$ to each node assigned a pair of blocks, where
                the block of $B$ includes
                the $i$-th column of $B.$
              \item
                Each node locally reconstructs the rows of $A$ corresponding
                to the  block of traversal of $T'$ in the pair of blocks
                assigned to it. Next, it computes the entries of $C$
                resulting from the inner products of the reconstructed
                rows of $A$ and the columns of $B$ from the second block
                in the pair assigned to it.
                Finally, it sends each computed entry $c_{ij}$ of $C$
                to the $i$-th node.
               \end{enumerate}

                The following sequence of lemmata provides upper bounds on the number
                of rounds needed to implement the consecutive steps of the protocol
                $CLUS-MAT(A,B).$ In these lemmata, we assume the notation from
                this protocol. In particular, $M$ stands for
                $\sum_{(i,j)\in T'}h(a_i,a_j)$, where $T'$ the traversal of
                the approximate MST of the rows of $A$
                and $a_m$ denotes the $m$-th row of $A$ for $m=1,\dots,n.$

                \begin{lemma}\label{lem: step1}
                    Steps 1,3, 5 and 6 in the protocol $CLUS-MAT$ can be implemented in $O(1)$ rounds.
                  \end{lemma}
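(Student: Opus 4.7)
The plan is to handle each of the four steps separately and show that each reduces either to a direct instance of the relaxed Information Distribution Task (Fact~\ref{fact: routing}) or to Lemma~\ref{lem: 1} with small constants $k,\ell$; Step~6 is local and needs no rounds at all.

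For Step~1, observe that in the initial distribution each node holds $2n$ input entries, and in the target distribution each node $i$ must hold the $n$ entries of $a_i$ and the $n$ entries of the $i$-th column of $B$. Thus every node has to send at most $2n$ messages (forwarding its initial entries to their row- or column-owner) and to receive at most $2n$ messages. This fits Lemma~\ref{lem: 1} with $k=\ell=2$ and is implemented in $O(1)$ rounds.

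For Step~3 I would use a standard scatter-and-regather broadcast: the tree $T$ consists of $n-1$ edges, each encodable in $O(1)$ words. The first node first sends edge $j$ of $T$ to node $j$ (for $j=1,\dots,n-1$), which is one message per recipient and thus $O(1)$ rounds by Fact~\ref{fact: routing}. Then every node that received a single edge broadcasts it to all other $n-1$ nodes; each node now sends and receives at most $n-1$ messages, so another $O(1)$ rounds by Fact~\ref{fact: routing} suffice. After these two phases every node holds $T$. For Step~5, once Step~4 has delivered the relevant rows, node $i$ computes the Hamming distance for the $i$-th edge of $T$ locally and then sends this single $O(\log n)$-bit value to every other node; this is a relaxed IDT with at most $n-1$ messages sent and received per node, again $O(1)$ rounds by Fact~\ref{fact: routing}.

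Step~6 requires no communication at all: after Steps~3 and~5 every node already stores $T$ together with the Hamming distance of every edge of $T$, and every node knows $n$. Since every node runs the same deterministic procedure on identical inputs, all nodes independently compute the same traversal $T'$, the same value $M=\sum_{(i,j)\in T'}h(a_i,a_j)$, the same decomposition of $T'$ into $t=\lceil\sqrt{M/n+1}\rceil$ cost-balanced blocks, the same column-block decomposition of $B$, and the same assignment of block pairs to nodes. The total round count over the four steps is therefore $O(1)$. The only place where one must be slightly careful is Step~3, since a naive application of Lemma~\ref{lem: 2} would cost $\Omega(\log n)$ rounds; the scatter-and-regather trick is what keeps the bound at $O(1)$.
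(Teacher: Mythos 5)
Your proposal is correct and follows essentially the same route as the paper: Steps 1 and 5 via the relaxed IDT / Lemma~\ref{lem: 1} with constant $k,\ell$, Step 3 via the two-phase scatter-then-broadcast through intermediate nodes (which is exactly the paper's trick for avoiding a $\log n$ factor), and Step 6 done purely locally. The only difference is that you spell out the message counts more explicitly than the paper does.
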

                  \begin{proof}
                    Step 1 and 5  can be implemented in $O(1)$ rounds by using the routing protocol from
                    Fact \ref{fact: routing} or Lemma \ref{lem: 1}.
                    To implement Step 3, the first node sends the $i$-th edge
                    of $T$ to the $i$-th node and in the next round the $i$-th
                    node sends it to the other nodes, for $i=1,\dots,n.$
                    Finally, Step 6 is done locally in a single round.
                    \qed
                  \end{proof}

                  By Theorem \ref{theo: hmst}, we obtain the next lemma.

                  \begin{lemma}\label{lem: step2}
                    Steps 2 in the protocol $CLUS-MAT$ can be implemented in $O(\log^3 n)$ rounds.
                  \end{lemma}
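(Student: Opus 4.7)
The plan is to invoke the round-complexity bound already established for protocol $HMST$. Observe that Step 2 of $CLUS-MAT$ is literally a call to $HMST(A)$, viewing the rows $a_1,\dots,a_n$ of $A$ as the input point set $S\subseteq\{0,1\}^n$. The precondition of $HMST$ is that for every $i$, the $i$-th node already holds the $i$-th input point. This is exactly what Step 1 of $CLUS-MAT$ has arranged: after Step 1, node $i$ holds $a_i$ (together with the $i$-th column of $B$, which plays no role in this step). Hence the input layout demanded by $HMST$ is in place and no additional data movement is needed before the call is issued.

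Once the invocation is justified, the round bound is immediate. Lemma \ref{lem: hmstrounds} — equivalently, the round-complexity part of Theorem \ref{theo: hmst} — asserts that $HMST$ terminates in $O(\log^3 n)$ rounds on the congested clique with $n$ nodes. Applying that bound to $HMST(A)$ gives the claimed $O(\log^3 n)$ round count for Step 2.

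There is no real obstacle here; the main thing worth remarking on is that the round bound of $HMST$ is deterministic, since the randomization inside $HMST$ only affects the quality of the approximate MST returned (the multiplicative factor $3$ and additive factor $3n-3$), not the number of communication rounds used. Consequently nothing probabilistic needs to be propagated into the statement of this lemma, and the $O(\log^3 n)$ bound on Step 2 follows unconditionally. \qed
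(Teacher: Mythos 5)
Your proposal is correct and matches the paper's own (one-line) justification: the paper derives this lemma directly from Theorem \ref{theo: hmst}, exactly as you do by noting that Step 2 is a call to $HMST(A)$ with the input layout already arranged by Step 1. Your added remarks on the precondition and on the determinism of the round bound are accurate but not a different approach.
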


                  \begin{lemma}
                  Steps 4 and 7 in the protocol $CLUS-MAT$ can be implemented in $O(\log n)$ rounds.
                  \end{lemma}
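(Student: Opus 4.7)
My plan is to reduce each of the two steps to an instance of Lemma \ref{lem: 2}, the broadcast-to-multiple-recipients routing primitive, and then read off the $O(\log n)$ round bound directly from its conclusion. Every row of $A$ is an $n$-bit vector, so it fits within the ``vector of at most $n$ messages of $O(\log n)$ bits'' budget of Lemma \ref{lem: 2}, and the only quantity left to bound in each case is the in-degree $\ell$, i.e.\ the number of distinct whole vectors a single node has to receive.

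For Step 4 I would take node $i$ as the sender of row $a_i$, and its set of recipients to be those nodes $j$ such that the $j$-th edge of $T$ has $i$ as an endpoint. The key structural observation is that the node identified with the $j$-th tree edge has exactly two endpoints in $T$, so it receives precisely two row-vectors. Hence $\ell = 2 = O(1)$, and Lemma \ref{lem: 2} yields the $O(\log n)$ bound.

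For Step 7 the sender is again node $i$, now broadcasting $a_i$ to every node that was assigned a pair (traversal block of $T'$, block of $B$) whose traversal block starts with row $i$. By the assignment set up in Step 6, each such pair is placed on a distinct node, so each recipient is assigned at most one pair and therefore needs only the single starting row of its pair; hence $\ell = 1$, and a second invocation of Lemma \ref{lem: 2} completes the argument.

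I do not expect a real obstacle here: the only thing to verify is the uniform bound $\ell = O(1)$ in each case, which follows from a tree edge having exactly two endpoints for Step 4 and from the injectivity of the pair-to-node assignment fixed in Step 6 for Step 7. Once these bounds are in hand, Lemma \ref{lem: 2} does the work and both steps run in $O(\log n)$ rounds.
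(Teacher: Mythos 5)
Your proposal is correct and matches the paper's own argument: both steps are reduced to Lemma \ref{lem: 2} by bounding the number of whole row-vectors any single node must receive (at most two in Step 4, since a node's index corresponds to one tree edge with two endpoints, and exactly one in Step 7, since each node is assigned a single block pair), giving $\ell = O(1)$ and hence $O(\log n)$ rounds. The paper states these in-degree bounds with slightly less justification, but the reasoning is the same.
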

                  \begin{proof}
                    In Step 4, each node sends its whole row of $A$ to a number
                    of recipient nodes and needs to receive at most two rows of $A.$
                    Hence, this step can be implemented in $O(\log n)$ rounds
                    by Lemma \ref{lem: 2}.
                    \junk{In Step 4, at most $2(n-1)n$ messages need to be delivered
                    and each node needs to receive at most $2n$ of
                    them. If the tree $T$ was binary then each node
                    would need to send its row of $A$ to at most three
                    other nodes, so it would need to send $O(n)$
                    messages. Thus, in the binary case, this step
                    could be implemented in $O(1)$ rounds by Lemma \ref{lem: 1}.
                    Unfortunately, some vertices in $T$ can have large
                    non-constant degree.  To tackle this difficulty,
                    we shall adhere to the known method of doubling in
                    parallel processing. Thus, each node that got
                    that needs to send its row of $A$ to a larger number
                    of nodes, divides the recipients into groups.  The
                    first group consists of two nodes, the second
                    possibly of four nodes etc.  The recipients in the
                    intermediate groups get also the task of passing
                    further the row to the next group. Thus, Step 4 is
                    implemented in at most $\log n$ phases, each
                    consisting of $O(1)$ rounds. Each 
                    recipient in an intermediate phase
                    passes a received message to at most two recipients
                    in the next  group. It can figure out to which
                    next recipients it should pass the message to on the
                    basis of the knowledge of $T$ and the index of the row
                    by applying a common scheduling algorithm.
                     As mentioned earlier, there are at most
                    $\log n$ phases, each taking $O(1)$ rounds, so the
                     upper bound of $O(\log n)$ rounds follows.}

                    Let us consider Step 7.
                    There are $t=\lceil \sqrt {\frac M n+1}\rceil $ blocks of the traversal
                    $T'$, each of them occurs in $\lfloor \frac n t \rfloor $ different pairs
                      of blocks of $T'$ and $B.$ Thus, each of the at most $t$ nodes
                      whose row of $A$ corresponds to the start vertex
                      of a block of $T'$ has to send its row to all
                      the nodes assigned a block pair, where the traversal
                      block starts from the vertex corresponding to the row.
                      Thus, at most $t \times n/t$ copies  of the
                      $t$ rows have to be delivered.
                      Each node is assigned only
                      a single block pair, so it needs to receive only a single
                      copy of a row of $A.$ Hence, Step 7 can also be implemented  in
                      $O(\log n)$ rounds by Lemma \ref{lem: 2}.
                      \junk{
                      The problem is that some nodes
                      need to send their rows of $A$ to at least $n/t$ nodes,
                      potentially even more, if the same vertex starts several
                      blocks of $T'.$ So, we can solve this congestion obstacle
                      analogously as in the case of Step by the doubling technique
                      which results in $O(\log n)$ rounds.}
                      \qed
                  \end{proof}

                  \begin{lemma}
                    Step 8 in the protocol $CLUS-MAT$ can be implemented in\\
                    $O\left(\log n \sqrt {\frac M n+1}\right)$ rounds  on the congested clique.
                  \end{lemma}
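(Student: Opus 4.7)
The plan is to implement Step~8 in two phases: a gather phase in which, for every block $B_j$ of the traversal $T'$, all witnesses of the edges of $B_j$ are collected at a designated leader (chosen among the $\lfloor n/t \rfloor$ recipients assigned a block pair containing $B_j$), and a scatter phase in which every leader re-broadcasts the accumulated witnesses to its sibling recipients.

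For the gather phase I would appeal to Lemma~\ref{lem: 1}. Since each edge of the approximate MST $T$ lies in at most two blocks of the traversal $T'$, node $i$ transmits at most $2 h_i \le 2n$ witnesses, giving $k = O(1)$. Each leader receives the $O(M/t)$ witnesses that make up the total cost of its block, so $\ell = O(M/(tn))$, which with $t = \lceil \sqrt{M/n + 1} \rceil$ is $O(t)$ when $M \ge n$ and $O(1)$ otherwise. Lemma~\ref{lem: 1} therefore completes the gather phase in $O(k \ell) = O(t)$ rounds.

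For the scatter phase, each of the $t$ leaders has to deliver the same batch of $O(M/t)$ witnesses to its $\lfloor n/t \rfloor - 1$ sibling recipients. I would split the batch at each leader into $\lceil (M/t)/n \rceil = O(t)$ chunks of at most $n$ messages, and process the chunks sequentially. For every chunk I invoke Lemma~\ref{lem: 2} simultaneously at all $t$ leaders; since every recipient belongs to exactly one block pair, it receives only its own leader's chunk-vector per iteration, so the parameter $\ell$ in Lemma~\ref{lem: 2} equals $1$ and the chunk is propagated in $O(\log n)$ rounds by the doubling scheme. Summing over the $O(t)$ chunks yields $O(t \log n)$ scatter rounds, which combined with the gather phase gives the claimed $O(\log n \sqrt{M/n + 1})$-round bound.

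The main obstacle I foresee is verifying that the $t$ parallel doubling broadcasts respect the per-round capacity of the congested clique: at every doubling level $\lambda \le \lceil \log(n/t) \rceil$ there are only $t \cdot 2^\lambda \le n$ active senders, each forwarding a single $n$-message vector to one new holder, so Lenzen's routing (Fact~\ref{fact: routing}) completes each level in $O(1)$ rounds. A naive single-phase routing would instead force one sender to deliver $h_i \cdot n/t$ messages and produce an $\Omega(n)$-round bottleneck; consolidating at the leaders breaks this bottleneck and costs only a logarithmic factor above the volume lower bound $\Omega(\sqrt{M/n})$.
\qed
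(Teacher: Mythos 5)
Your proposal is correct and follows essentially the same two-stage gather-then-scatter strategy as the paper: collect the witnesses of each traversal block at designated nodes via Lemma~\ref{lem: 1}, then replicate them to the $\lfloor n/t\rfloor$ nodes assigned that block in $O(n)$-message chunks via Lemma~\ref{lem: 2}, for $O(t\log n)$ rounds overall. The only difference is that you concentrate each block's $O(M/t)$ witnesses at a single leader (making the gather cost $O(t)$ rounds), whereas the paper spreads them over several representatives so that each receives only $O(n)$ witnesses and the gather takes $O(1)$ rounds; since the scatter dominates, both yield the stated $O\left(\log n\sqrt{\frac{M}{n}+1}\right)$ bound.
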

                  \begin{proof}
                    It follows from
                    the definition of $M$
                    that there are at most  $M$ witnesses of Hamming distances
                    between the rows of $A$ corresponding to the endpoints
                    of edges in $T'.$ Recall that $t=\lceil \sqrt {\frac M n+1}\rceil .$
                    Each of the witnesses has to be sent to the $\lfloor n/t \rfloor $
                    nodes, where the travel block in the assigned block pair
                    includes the (directed) edge corresponding to the witness.
                    The witness delivery is performed in two stages.

                    In the first stage, for
                    each block of the traversal $T'$, $\lfloor n/t\rfloor $ consecutive 
                    node representatives are designated by the same
                    algorithm run locally at each node. Next, each node for each its witness
                    computes the interval of indices of the representative nodes for the block
                    of $T'$ the witness belongs to. Next, the node picks one of the representatives
                    nodes to send the witness  to such that each representative node receives $O(n)$ such witnesses.
                    The determination of the target representative nodes
                    is done on the basis of the knowledge
                    of the blocks of $T'$ and the Hamming distances between the rows of
                    $A$ corresponding to the endpoints of edges in $T'$ (recall Steps 3 and 5)
                    by using the same assignment algorithm.
                    Thus, in the first stage each node needs to send at most $n$
                    witnesses (i.e., $O(n)$ messages) as well as to receive
                    at most $O(n)$witnesses (i.e.m $O(n)$ messages).
                    By Lemma \ref{lem: 1}, this stage can be implemented in $O(1)$ rounds.

                    In the second stage, each representative node of a
                    block $b$ of $T'$ sends its vector of $O(n)$
                    messages to each of the $\lfloor n/t \rfloor $ nodes assigned a
                    block pair including the block $b.$ 
                    The second stage can be easily decomposed in $O(1)$ sub-stages
                    such that in each consecutive sub-stage, each node
                    representative of a block of $T'$ sends its
                    consecutive sub-vector of at most $n$ messages to
                    the $\lfloor n/t\rfloor $ nodes. In each sub-stage, each node
                    receives $O(\frac M {nt})=O(t)$ vectors of $n$ or fewer
                    messages.It follows from Lemma \ref{lem: 2} that each sub-stage
                    can be implemented in $O(t\log n )$ rounds. 
                    \junk{
                    Since some of the nodes can have up to linear in $n$ number of witnesses,
                    in order to avoid a congestion, each of the nodes locally determines
                    the same balancing assignment, so the nodes having more than the average
                    number of witness pass some of their witnesses to the nodes having
                    a lower number of witnesses. The averaging phase can be implemented
                    in $O(1)$ rounds by Fact . In result, each node has $O(M/n)$ witnesses
                    to distribute, each of them should be delivered to $n/t$ nodes,
                    so each node should send $O(\frac M {nt}n=M/t)$ messages and receive up
                    to $$O(\frac M {nt}n=M/t)$M/t$ messages.

                    Since $M/t^2=n,$ this process can be easily decomposed into $O(t)$
                    phases, so in each phase each node has to send and receive up
                    to $n$ messages, so each of the phases can be implemented in
                    $O(1)$ rounds by Fact .}
                    \qed
                  \end{proof}

                  \begin{lemma}
                    Step 9 in the protocol $CLUS-MAT$ can be implemented in\\
                    $O
                    \left( \log n \sqrt {\frac M n+1}\right)$ rounds.
                  \end{lemma}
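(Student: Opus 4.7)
The plan is to reduce Step 9 directly to an application of Lemma \ref{lem: 2}, viewing each column of $B$ as a vector of $n$ messages that must be broadcast from its holder to the set of nodes whose assigned block pair contains that column.

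First I would identify the sender/receiver pattern. Recall that $t = \lceil \sqrt{M/n+1}\rceil$ and that $B$ has been decomposed into $\lfloor n/t \rfloor$ blocks of at most $t+1$ consecutive columns. Each column of $B$ lies in exactly one block of $B$, and each block of $B$ is paired with each of the $t$ blocks of the traversal $T'$, hence appears in at most $t$ different assigned pairs. Consequently, for $i=1,\dots,n$, the $i$-th node has to send its length-$n$ column vector to at most $t$ distinct nodes.

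Next I would bound how many column vectors any single node has to receive. Since every node is assigned exactly one block pair, and the $B$-component of that pair contains at most $t+1$ columns, no node receives more than $t+1$ column vectors from distinct senders.

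These two observations match the hypotheses of Lemma \ref{lem: 2} verbatim: each of the $n$ nodes sends its own vector of at most $n$ messages to a set of recipients, and each recipient collects vectors from at most $\ell = t+1$ senders. Invoking Lemma \ref{lem: 2} therefore yields an implementation in $O(\ell \log n) = O(t \log n) = O\!\left(\log n \sqrt{\tfrac{M}{n}+1}\right)$ rounds, as claimed. I do not expect any real obstacle here; the only thing to verify carefully is that the load $t+1$ on the receiving side follows from the single-pair assignment property set up in Step 6, and that knowledge of the common block decomposition (broadcast in earlier steps) lets every node locally compute the recipient set for its own column without extra communication.
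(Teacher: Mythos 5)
Your proposal is correct and follows essentially the same route as the paper's own proof: each column holder sends its length-$n$ column to the $t$ nodes whose assigned pair contains that column's block of $B$, each node receives at most $t+1$ columns from distinct senders, and Lemma \ref{lem: 2} with $\ell = t+1$ gives the $O\!\left(\log n \sqrt{\frac{M}{n}+1}\right)$ bound. Your added remark about verifying that the common block decomposition lets each node compute its recipient set locally is a reasonable point of care, and it is indeed guaranteed by the shared computation in Step 6.
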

                  \begin{proof}
                    There are $t$ pairs of blocks, where the block of $B$ includes
                    the $i$-th column of $B.$
                    Thus, the $i$-th node has to sens its column to
                    $t=O\left(\sqrt{\frac M n+1}\right)$ nodes.
                    On the other hand, each node in order to construct
                    the block pair assigned to it needs to receive at most $t+1$
                    consecutive columns of $B,$  each from a distinct node.
                    This can be done in $O\left(\log n \sqrt {\frac M n+1}\right)$ rounds by Lemma \ref{lem: 2}.
                    \qed
                  \end{proof}

                  \begin{lemma}\label{lem: step10}
                    Step 10 in the protocol $CLUS-MAT$ can be implemented in
                    $O\left(\sqrt {\frac M n+1}\right)$ rounds.
                  \end{lemma}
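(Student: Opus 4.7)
The plan is to separate Step 10 into its local computation, which contributes no rounds, and its communication component, which consists in routing each produced entry $c_{ij}$ to node $i$. Using the starting row delivered in Step 7 and the Hamming witnesses delivered in Step 8, the assigned node reconstructs all rows of $A$ indexed by its traversal block; it then forms inner products against the (at most $t+1$) columns of $B$ received in Step 9. All of this is performed in a single round of unbounded local computation, so it contributes nothing to the round count.

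The key to bounding the communication is to make every entry $c_{ij}$ the responsibility of a single sender. Since a row index $i$ may appear in several blocks of the traversal $T'$, I fix once and for all a canonical visit of each row $i$ in $T'$ — an assignment computable locally from the globally-known tree $T$ and the decomposition built in Step 6 — and stipulate that $c_{ij}$ is forwarded only by the node whose block pair contains this canonical visit and whose $B$-block contains column $j$. This convention requires no extra communication. Under it, node $i$ is the destination of exactly the $n$ entries of its row of $C$, so in the notation of Lemma \ref{lem: 1} the reception parameter is $\ell = O(1)$.

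For the sending parameter, observe that one traversal block contains at most $s_b$ vertex visits, and since $T'$ has fewer than $2n$ directed edges in total we have $s_b \le 2n$; its paired $B$-block contains at most $t+1$ columns. Hence the node assigned to this pair produces, and under the canonical convention forwards, at most $s_b(t+1) = O(nt)$ entries of $C$, giving sending parameter $k = O(t)$. Applying Lemma \ref{lem: 1} with $k = O(t)$ and $\ell = O(1)$ yields $O(k\ell) = O(t) = O(\sqrt{M/n+1})$ rounds, which is the claimed bound. The main pitfall to watch for is precisely the canonical-sender bookkeeping: without it, a high-degree vertex of $T$ could cause its node to be flooded with up to $\Theta(n^2)$ duplicate deliveries, ruining the round count.
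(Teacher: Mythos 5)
Your proposal follows the same overall route as the paper: Step 10 is a pure routing problem, each assigned node has at most $n(t+1)=O(nt)$ entries to forward (sending parameter $k=O(t)$), each destination node receives $O(n)$ entries (reception parameter $\ell=O(1)$), and Lemma \ref{lem: 1} gives $O(k\ell)=O(t)=O\bigl(\sqrt{M/n+1}\bigr)$ rounds. The one genuine difference is your canonical-sender convention. The paper simply asserts that each node is the recipient of $O(n)$ entries, justifying this only by the remark that an entry may be ``computed and delivered twice'' because each undirected edge of $T$ appears twice in the traversal $T'$. That remark does not obviously cover a high-degree vertex of $T$: such a vertex is visited $\deg_T(i)$ times in the Euler-type traversal, its visits can fall into up to $\min(\deg_T(i),t)$ distinct traversal blocks, and without deduplication node $i$ could then receive up to $\Theta(nt)$ copies of its row's entries, which would degrade the bound from Lemma \ref{lem: 1}. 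Your convention of designating, locally and consistently at every node, a single canonical visit of each row (computable from the globally known $T$ and the Step 6 decomposition, hence free of communication) pins $\ell$ at $O(1)$ cleanly. So your argument is not merely equivalent to the paper's; it supplies the bookkeeping that makes the paper's reception bound airtight. The only minor quibble is your bound $s_b\le 2n$ on visits per block, which is correct but could be stated more directly as: the number of distinct rows reconstructed in one block is at most $n$, which is all that is needed for $k=O(t)$.
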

                  \begin{proof}
                    After the local computations, each node has to send the at most
                    $n \times t=n \times O\left(\sqrt \frac M n+1\right)$ computed
                    entries of $C$ to the nodes assigned
                    to the rows of $C$ the entries belong to, respectively.
                    On the other hand, each node is recipient of $O(n)$ entries
                    (some entries can be computed and delivered twice because
                    some undirected edges of $T$ can appear twice as directed ones in the traversal $T'$).
                    Hence, the process can be implemented in $O(t)$ rounds
                    by Lemma \ref{lem: 1}.
                    \qed
                  \end{proof}
\junk{
                  \begin{lemma}\label{lem: work}
                    The upper bounds on the number of rounds necessary to
                    implement subsequent steps of the protocol $CLUS-MAT$
                    specified in Lemmata  6-11 require
                    $\tilde{O}(n+(n+M)))$ work.
                  \end{lemma}
                  \begin{proof}
                    It is easily seen that each node in each step
                    different from Steps 2 and 10 performs
                    $\tilde{O}(n)$ work (inclusive the applications of
                    Fact \ref{fact: routing} and Lemmata \ref{lem:
                      1},\ref{lem: 2}) within the upper bounds on the
                    number of rounds given in Lemmata 6 and 8-10. In
                    particular, we assume here that the block
                    decomposition and assignments are obtained by each
                    node independently in parallel by running the same
                    $\tilde{O}(n)$-time straightforward algorithms or
                    heuristics.  Step 2 requires $\tilde{O}(n^2)$ work
                    by Theorem \ref{theo: hmst}. Consequently, Steps
                    1-9 involve $O\left(n^2 \log n \sqrt {\frac M
                      n+1}\right)= \tilde{O}(nM)$ work including the
                    exchange of messages by Lemmata 7-11.  It remains
                    to estimate the work needed to implement Step
                    10. The sending and receiving messages involves
                    $O\left(n^2\log n \sqrt {\frac M n+1}\right)=
                    \tilde{O}(nM)$ work by Lemma \ref{lem: step10}.
                    To efficiently implement the local computations in
                    this step, we adopt the method from \cite{BL}.  It
                    computes a row of the product matrix in time
                    $\tilde{O}(M+n).$ In our case, the block of the
                    traversal $T'$ assigned to each node has Hamming
                    cost $O(M/t)$, while the block of $B$ assigned to
                    each node consists of at most $t+1$ columns of
                    $B.$ Therefore, each node can compute ``its''
                    entries of the product matrix in
                    $\tilde{0}(M/t)t)$ time. Hence, this step involves
                    $\tilde{O}(Mn)$ work similarly as the sequential
                    algorithm in \cite{BL}.  \qed
                  \end{proof}}

                     \begin{theorem}\label{theo: main}
                    Let $A,\ B$ be two $n\times n$ Boolean matrices,
                    whose $i$-th rows, $i=1,\dots,n,$  are initially held in the $
                    i$-th node of the congested clique
                    on $n$ nodes. Let $M$ be the minimum of the  cost of an MST
                    of the rows of $A$ and the cost of an MST of the columns
                    of $B$ in the Hamming space $\{0,1\}^n.$
                    W.h.p. the Boolean matrix product of $A$ and $B$ can be computed
                    in $\tilde{O}\left(\sqrt {\frac M n+1}\right)$ rounds
                                        on the congested clique with $n$ nodes such that the $i$-th row of the product
                    is provided in the $i$-th node of the congested clique
                     \end{theorem}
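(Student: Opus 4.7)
The plan is to invoke the protocol $CLUS-MAT(A,B)$ and argue correctness and round complexity in turn. To exploit the minimum of the two MST costs appearing in the statement, I would run the protocol once on $(A,B)$ and once on $(B^{t},A^{t})$ in parallel; the two approximate-MST costs produced at the end of Step 2 of the two executions tell every node which run to commit to. The transposed run yields $(AB)^{t}$, which is redistributed into the required row-wise layout by one application of Fact \ref{fact: routing} in $O(1)$ additional rounds.

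For correctness, the point is that every node assigned a pair of blocks can locally reconstruct each row of $A$ indexed by a vertex of its traversal block from the starting row received in Step 7 together with the witnesses of the Hamming distances along the edges of the block received in Step 8. Combined with the columns of $B$ received in Step 9, it then computes a well-defined set of entries of the Boolean product in Step 10 and forwards them to the appropriate row owners.

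For the round complexity, I would simply sum the per-step bounds proved in the preceding lemmata: Steps 1, 3, 5, 6 cost $O(1)$; Step 2 costs $O(\log^{3} n)$; Steps 4 and 7 cost $O(\log n)$ each; Steps 8 and 9 each cost $O(\log n\,\sqrt{M/n+1})$; and Step 10 costs $O(\sqrt{M/n+1})$. The total telescopes into $\tilde{O}(\sqrt{M/n+1})$, the $\log^{3} n$ term being absorbed by $\tilde{O}$.

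The remaining issue, which I expect to be the main delicacy, is that the quantity $M$ used inside the protocol is the cost of the traversal $T'$ of the approximate MST $T$, whereas the statement refers to the cost $M^{\ast}$ of an exact Hamming MST of the rows of $A$ (or symmetrically of the columns of $B$). Since each undirected edge of $T$ occurs at most twice in $T'$, the traversal cost is at most $2\sum_{(i,j)\in T}h(a_{i},a_{j})$, which by Lemma \ref{lem: approx} is w.h.p.\ at most $6M^{\ast}+6n-6$. Hence $\sqrt{M/n+1}=O(\sqrt{M^{\ast}/n+1})$ w.h.p., and together with the bookkeeping for the parallel run on $(B^{t},A^{t})$ this yields the claimed bound.
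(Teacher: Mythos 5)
Your proposal is correct and follows essentially the same route as the paper: run $CLUS$-$MAT$ on $(A,B)$ or on $(B^{t},A^{t})$ according to which approximate MST is cheaper, and sum the per-step round bounds from the preceding lemmata. In fact you are more explicit than the paper on the one delicate point, namely that the lemmata bound rounds in terms of the traversal cost of the \emph{approximate} tree $T'$ rather than the exact MST cost; your chain $\sum_{(i,j)\in T'}h(a_i,a_j)\le 2\sum_{(i,j)\in T}h(a_i,a_j)\le 6M^{\ast}+6n-6$ via Lemma \ref{lem: approx}, which keeps $\sqrt{M/n+1}=O(\sqrt{M^{\ast}/n+1})$, is exactly the bridge the paper leaves implicit.
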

                     \begin{proof}
                       Let $M_A,\ M_B$ be the cost of an MST of the rows of $A$
                       and the cost of an MST of the columns of $B$ in
                       the Hamming space $\{0,1\}^n$, respectively.
                       By Theorem \ref{theo: hmst}
                       and Lemmata \ref{lem: step1}-\ref{lem: step10}, the protocol
                       $CLUS-MAT(A,B)$ computes the Boolean matrix
                       product of $A$ and $B$ in $\tilde{O}\left(\sqrt {\frac {M_A} n+1}\right)$ rounds w.h.p.
                       We can also apply
                       the protocol $CLUS-MAT(B^t,A^t)$ in order compute
                       the transpose of the product
                       of $A$ and $B$
                       in $\tilde{O}\left(\sqrt {\frac {M_B} n+1}\right)$ rounds w.h.p.
                       So, we can first compute approximations of both minimum spanning trees in
                       $O(\log^3 n)$ rounds
                       by Theorem \ref{theo: hmst}
                       and then run the protocol on $A$ and $B,$ or $B^t$ and $A^t$, depending
                       on which of the approximations has a smaller cost.
                       \qed.
                     \end{proof}

                     \section{Final Remarks}
                     In the sequential Random Access Machine model,
                     highly clustered rows of the first matrix or columns
                     of the second matrix allow for a more efficient computation
                     of the Boolean matrix product \cite{BL,GL03} while
                     in the congested clique model the clustering enables 
                     for fewer rounds necessary to distribute fragments of the input
                     matures among the nodes (Theorem \ref{theo: main}).
                     For this reason, our protocol for the Boolean matrix product
                     cannot be regarded as only  an implementation of the
                     sequential algorithm from \cite{BL}  on the congested
                     clique.

                     The extended Hamming distance ($eh(\ ,\ )$)
                     introduced in \cite{GL03} is a generalization of
                     the Hamming distance to include blocks of zeros
                     and ones. It is defined recursively for two $0-1$
                     strings (alternatively, vectors or sequences)
                     $s=s_1 s_2\dots s_m$ and $u=u_1 u_2\dots u_m$ as
                     follows:
                     $$eh(s,u)=eh(s_{\ell +1}\dots s_m,u_{\ell +1}\dots u_m)+
                     (s_1+u_1)\mod 2,$$
                     where $\ell$ is the maximum number such that
                     $s_j=s_1$ and $u_j=u_1,$ for $j=1,\dots,\ell.$
                     Note that the extended Hamming distance between
                     any two strings or vectors never exceeds the
                     Hamming one.
                     We can carry over our bound of
                     $\tilde{O}\left(\sqrt {\frac M n+1}\right)$
                     rounds for the Boolean matrix product 
                     to the extended  Hamming distance
                     in $\{0,\ 1\}^n$  analogously as it was
                     done in case of the sequential model in
                     \cite{GL03}.

                     Let us define the work done by a protocol within
                     given $t$ rounds on the congested clique as the
                     total time taken by the local computations at the
                     nodes of the clique, inclusive posting and
                     receiving messages, within the $t$ rounds, say,
                     in the logarithmic-cost Random Access Machine
                     model \cite{AHU}. Our protocols do not overuse
                     the unlimited local computational power in the
                     congested clique model.  Assuming that Lenzen's
                     $O(1)$-round routing protocol involves
                     $\tilde{O}(n^2)$ work, it is not difficult to
                     observe that the total work done by our protocol
                     for the approximate minimum spanning tree of $n$
                     points in $\{0,\ 1\}^n$ is $\tilde{O}(n^2)$ while
                     that done by our protocol for the $n\times n$
                     Boolean matrix product is $\tilde{O}(n(n+M))$
                     w.h.p.  The latter upper bound asymptotically
                     matches the time complexity of the sequential
                     MST-based algorithm for Boolean matrix product
                     from \cite{BL}.

                     
  \small
  \bibliographystyle{abbrv}                    
\bibliography{Voronoi}

\begin{thebibliography}{10}

\bibitem{AHU}
A.~Aho, J.~Hopcroft, and J.~Ullman.
\newblock {\em The Design and Analysis of Computer Algorithms}.
\newblock Addison-Wesley Publishing Company, Reading, 1974.

\bibitem{AIR18}
A.~Andoni, P.~Indyk, and I.~Razenshteyn.
\newblock Approximate nearest neighbor search in high dimensions.
\newblock {\em CoRR}, abs/1806.09823, 2018.

\bibitem{AKP21}
J.~Augustine, K.~Kothapalli, and G.~Pandurangan.
\newblock Efficient distributed algorithms in the k-machine model via pram
  simulations.
\newblock In {\em Proceedings of the IEEE International Parallel and
  Distributed Processing Symposium (IPDPS)}, pages 223--232. IEEE, 2021.

\bibitem{BL}
A.~Bjorklund and A.~Lingas.
\newblock Fast boolean matrix multiplication for highly clustered data.
\newblock In {\em Proceedings of the Algorithms and Data Structures Symposium
  WADS}, pages 258--263. Springer, 2001.

\bibitem{CK19}
K.~Censor-Hillel, P.~Kaski, J.~H. Korhonen, C.~Lenzen, A.~Paz, and J.~Suomela.
\newblock Algebraic methods in the congested clique.
\newblock {\em Distributed Computing}, 32(6):461--478, 2019.

\bibitem{CLT20}
K.~Censor-Hillel, D.~Leitersdorf, and E.~Turne.
\newblock Sparse matrix multiplication and triangle listing in the congested
  clique model.
\newblock {\em Theoretical Computer Science}, 809:45--60, 2020.

\bibitem{GL03}
L.~Gasieniec and A.~Lingas.
\newblock An improved bound on boolean matrix multiplication for highly
  clustered data.
\newblock In {\em Proceedings of the Algorithms and Data Structures Symposium
  WADS}, pages 329--339. Springer, 2003.

\bibitem{GZJ06}
O.~Grygorash, Y.~Zhou, and Z.~Jorgensen.
\newblock Minimum spanning tree based clustering algorithms.
\newblock In {\em Proceedings of the 18th International Conference on Tools
  with Ar- tificial Intelligence (ICTAI’06)}, pages 73--81. IEEE, 2006.

\bibitem{JLLP24}
J.~Jansson, C.~Levcopoulos, A.~Lingas, and V.~Polishchuk.
\newblock Convex hulls, triangulations, and {Voronoi} diagrams of planar point
  sets on the congested clique.
\newblock arXiv:2305.09987, 2023. Preliminary version in \emph{Proceedings of
  the Thirty-Fifth Canadian Conference on Computational Geometry (CCCG~2023)},
  pages~183--189, 2023.

\bibitem{KOR00}
E.~Kushilevitz, R.~Ostrovsky, and Y.~Rabani.
\newblock Efficient search for approximate nearest neighbor in high dimensional
  spaces.
\newblock {\em SIAM Journal on Computing}, 30(2):467--474, 2000.

\bibitem{LRN09}
C.~Lai, T.~Rafa, and D.~E. Nelson.
\newblock Approximate minimum spanning tree clustering in high-dimensional
  space.
\newblock {\em Intelligent Data Analysis}, 13(4):575--597, 2009.

\bibitem{L}
C.~Lenzen.
\newblock Optimal deterministic routing and sorting on the congested clique.
\newblock In {\em Proceedings of the 2013 ACM Symposium on Principles of
  Distributed Computing (PODC~2013)}, pages 42--50. ACM, 2013.

\bibitem{LP-SPP05}
Z.~Lotker, B.~Patt-Shamir, E.~Pavlov, and D.~Peleg.
\newblock Minimum-weight spanning tree construction in {$O(\log \log n)$}
  communication rounds.
\newblock {\em SIAM Journal on Computing}, 35(1):120--131, 2005.

\bibitem{K_21}
K.~Nowicki.
\newblock A deterministic algorithm for the {MST} problem in constant rounds of
  congested clique.
\newblock In {\em Proceedings of the Fifty-Third Annual ACM SIGACT Symposium on
  Theory of Computing (STOC~2021)}, pages 1154--1165. ACM, 2021.

\bibitem{R}
P.~Robinson.
\newblock Brief announcement: What can we compute in a single round of the
  congested clique?
\newblock In {\em Proceedings of the 2023 ACM Symposium on Principles of
  Distributed Computing (PODC~2023)}, pages 168--171. ACM, 2023.

\bibitem{VXZ24}
V.~V. Williams, Y.~Xu, Z.~Xu, and R.~Zhou.
\newblock New bounds for matrix multiplication: from alpha to omega.
\newblock In {\em Proceedings of the Annual ACM-SIAM Symposium on Discrete
  Algorithms (SODA~2024)}. ACM-SIAM, 2024.

\end{thebibliography}
                     \vfill
                     \end{document}